\documentclass[10pt]{article}

\usepackage{amssymb,amsmath,amsfonts}
\usepackage{graphicx,color,enumitem}
\usepackage{epsfig}
\usepackage{amsthm} 
\usepackage{bm,dsfont}
\usepackage{subcaption}
\usepackage[round]{natbib}

\usepackage[a4paper, twoside, hmarginratio=1:1, vmarginratio=1:1, left=1in,top=1in]{geometry}

\RequirePackage[breaklinks=true, hidelinks]{hyperref}
\usepackage{breakcites}

 \newcommand{\p}{\mathbb{P}}
 \newcommand{\F}{\mathbb{F}}
 \newcommand{\E}{\mathbb{E}}
 \newcommand{\tildeE}{\mathbb{\tilde E}}
 \newcommand{\R}{\mathbb{R}}
 \newcommand{\cF}{{\mathcal F}}
 \newcommand{\cA}{{\mathcal A}}
 \newcommand{\cU}{{\mathcal U}}

 \newcommand{\id}{\mathds{1}}
 \newcommand{\as}{\mbox{{\rm a.s.}}}

\newtheorem{theorem}{Theorem}[section]

\newtheorem{assumption}[theorem]{Assumption}

\newtheorem{definition}[theorem]{Definition}

\begin{document}
\title{Merton's portfolio problem under Volterra Heston model}
\author{Bingyan Han \thanks{Department of Statistics, The Chinese University of Hong Kong, Hong Kong, byhan@link.cuhk.edu.hk}
	\and Hoi Ying Wong\thanks{Department of Statistics, The Chinese University of Hong Kong, Hong Kong, hywong@cuhk.edu.hk}
}
\date{November 18, 2019}

\maketitle

\begin{abstract}
This paper investigates Merton's portfolio problem in a rough stochastic environment described by Volterra Heston model. The model has a non-Markovian and non-semimartingale structure. By considering an auxiliary random process, we solve the portfolio optimization problem with the martingale optimality principle. Optimal strategies for power and exponential utilities are derived in semi-closed form solutions depending on the respective Riccati-Volterra equations. We numerically examine the relationship between investment demand and volatility roughness.
	\\[2ex] 
	\noindent{\textbf {Keywords:} Optimal portfolio, rough volatility, Volterra Heston model, Riccati-Volterra equations, utility maximization}
	\\[2ex]
	\noindent{\textbf {Mathematics Subject Classification:} 93E20, 60G22, 49N90, 60H10.}
\end{abstract}



\section{Introduction}
Empirical studies suggest that implied and realized volatilities of major financial indices tend to have rougher sample paths than the ones modeled by the standard Brownian motion \citep{gatheral2018volatility}. This discovery stimulates a rapidly growing development in rough volatility models recently. This new generation of models are constructed with stochastic processes such as fractional Brownian motion (fBm), fractional Ornstein-Uhlenbeck (fOU) process, and rough Bergomi (rBergomi) model. Recent empirical studies suggest that roughness is also associated with market fear \citep{caporale2018fear} and risks in pension fund portfolios \citep{cadoni2017pension}.

The popularity of the Heston model in the financial market leads to the introduction of the fractional Heston model  \citep{guennoun2018asym} and the rough Heston model \citep{eleuch2019char}. Both are rough versions of the celebrated Heston stochastic volatility model. Compared with classic Heston model, rough Heston model \citep{eleuch2019char} builds on market microstructure and better captures the explosion of at-the-money (ATM) skew, as illustrated in Figures (\ref{Fig:ATM1})-(\ref{Fig:ATM2}). Other motivations and properties of rough volatility models can be found in \cite{gatheral2018volatility,eleuch2019char}. Recent remarkable advances include the derivation of the characteristic function of the rough Heston model \citep{eleuch2019char} and the affine Volterra processes \citep{abi2017affine}. The Volterra Heston model serves as an important specific example in \cite{abi2017affine}. In addition, the rough Heston model becomes a special case of Volterra Heston model under the fractional kernel. The structure of characteristic functions in \cite{eleuch2019char} can be extended to affine Volterra processes using Riccati-Volterra equations as shown in \cite{abi2017affine}. Therefore, this paper focuses on the financial market with the Volterra Heston model.

\begin{figure}[!h]
	\centering
	\begin{minipage}{0.5\textwidth}
		\centering
		\includegraphics[width=0.9\textwidth]{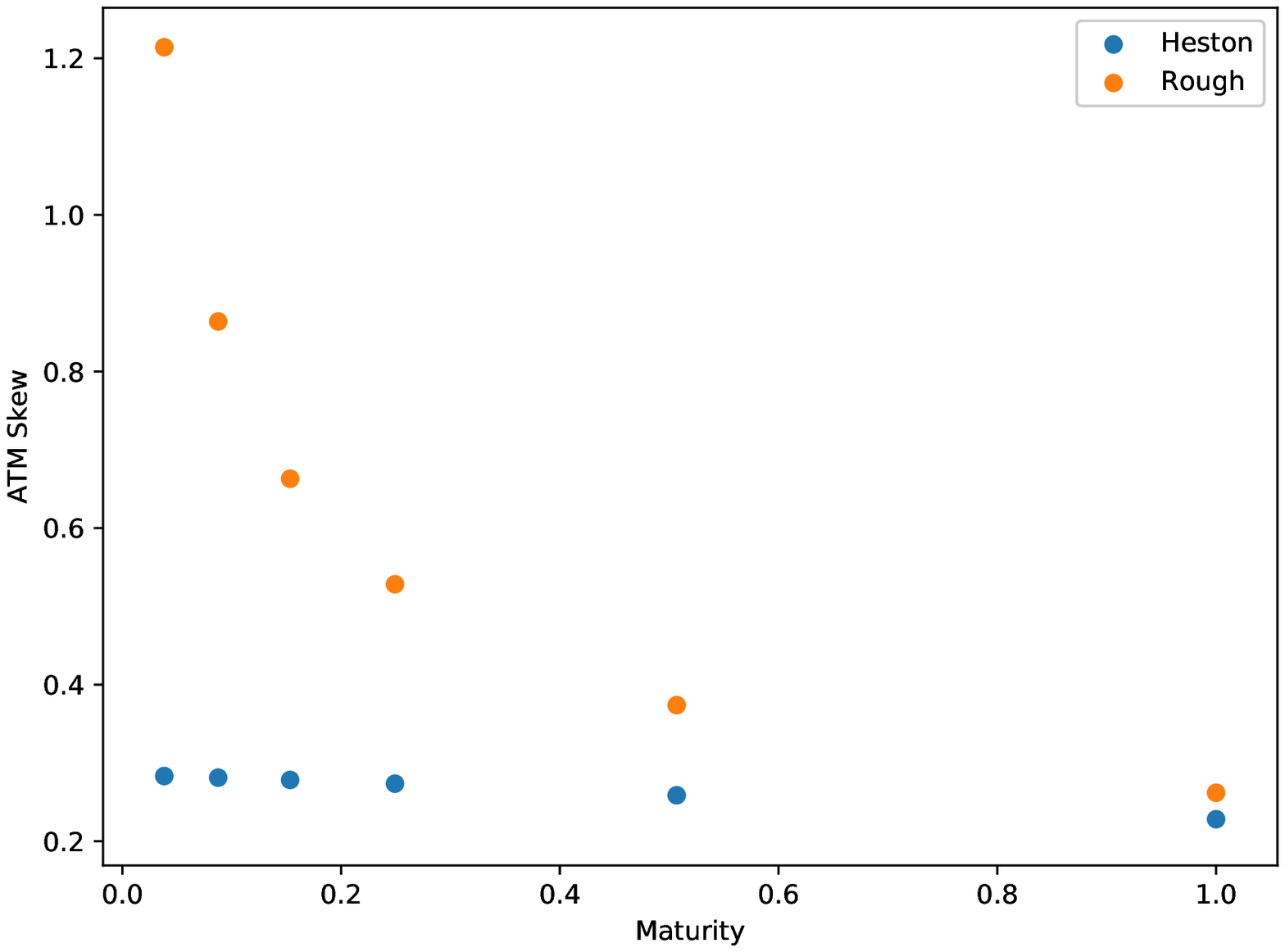}
		\subcaption{ATM skew of Heston and rough Heston model}\label{Fig:ATM1}
	\end{minipage}%
	\begin{minipage}{0.5\textwidth}
		\centering
		\includegraphics[width=0.9\textwidth]{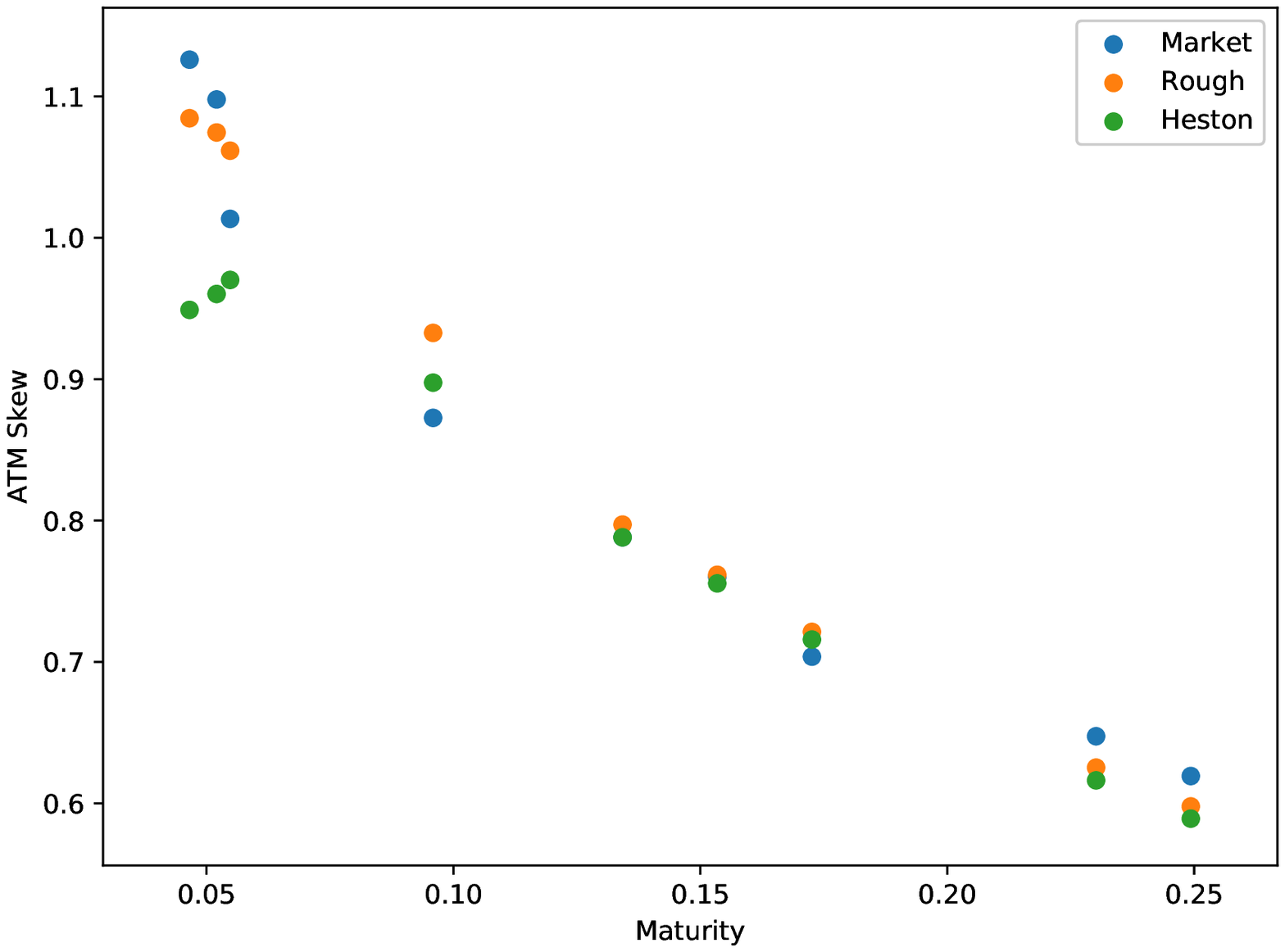}
		\subcaption{Calibration}\label{Fig:ATM2}
	\end{minipage}
	\caption{ATM skew. Left subplot gives a theoretical comparison. For rough Heston model, parameters are $H = 0.12$,  $\phi = 0.3156$,  $\kappa = 0.1$, $ \rho = -0.681$,  $\sigma = 1.044$, $V_0 = 0.0392$. For classic Heston model, let $H= 0.5$ and others unchanged. Rough Heston model captures the explosion of ATM skew easily. The right subplot illustrates calibration based on S\&P 500 implied volatility surface on 03/29/2019.  For rough Heston model, parameters are $H = 0.118$,  $\phi = 0.170$,  $\kappa = 0.173$, $ \rho = -0.615$,  $\sigma = 0.340$, $V_0 = 0.0182$. For classic Heston model, $\phi = 0.037$,  $\kappa = 5.609$, $\rho = - 0.614$,  $\sigma = 1.188$,  $V_0 = 0.015$. Classic Heston model obtains larger $\kappa$ and $\sigma$, but it still fails to model the explosion of ATM skew, observed from the market data when the maturity approaches zero. We refer readers to (\ref{vol}) for notations.}
\end{figure}

How does the roughness of the market volatility affect investment demands? We address this by investigating the optimal investment demand with the Merton problem as it is probably the most classic financial economic approach to do so \citep{merton1969lifetime,kraft2005optimal,hu2005utility,jang2014ambiguity}. The literature tends to focus more on the option pricing problems and portfolio optimization under rough volatility models is still at an early stage. However, some recent works do exist, see \cite{fouque2018aoptimal, fouque2018boptimal, bauerle2018portfolio, glasserman2018buy, han2019mean,han2019time} and references therein. The studies in \cite{fouque2018aoptimal, fouque2018boptimal} consider the expected power utility portfolio maximization with slow or fast varying stochastic factors driven by the fOU processes whereas the fractional Heston model \citep{guennoun2018asym} is used in \cite{bauerle2018portfolio} with the same objective function. Inspired by some market insights, it is suggested in \cite{glasserman2018buy} to use roughness as a trading signal. To the best of our knowledge, portfolio selection with Volterra Heston model is firstly studied in \cite{han2019mean} in the context of mean-variance objective.

In this paper, we investigate Merton's portfolio problem with an unbounded risk premium. To overcome the difficulty from the non-Markovian and non-semimartingale characteristic in Volterra Heston model, we apply the martingale optimality principle \citep{hu2005utility,pham2009book,jeanblanc2012mean} and construct the Ansatz, which is inspired by the martingale distortion transformation \citep{zariphopoulou2001solution, fouque2018aoptimal} and the exponential-affine representations \citep{abi2017affine}. The key finding is the auxiliary processes $M_t$ in (\ref{Eq:M}) and (\ref{Eq:ExpM}) with properties presented in Theorem \ref{Thm:M} and \ref{Thm:ExpM} below. We offer explicit solutions to the optimal portfolio policies that depend on Riccati-Volterra equations, which can be solved by well-known numerical methods. Our result differs from previous literature as follows:
\begin{itemize}
	\item \cite{hu2005utility,fouque2018aoptimal,fouque2018boptimal} assume an essentially bounded risk premium. This assumption does not hold for the Volterra Heston model so that their results cannot be applied here.
	\item Compared with \cite{bauerle2018portfolio}, we consider different  models and provide explicit solutions for a non-zero correlation between stock and volatility, reflecting the well-known market leverage effect. 
	\item Although the power utility maximization with the classic Heston model has been studied in \cite{kraft2005optimal}, the non-Markovian and non-semimartingale characteristic in Volterra Heston model prevents the use of the Hamilton-Jacobi-Bellman (HJB) framework for the classic model in \cite{kraft2005optimal}.
	\item The $M_t$ in (\ref{Eq:M}) is unbounded, making the proof of Theorem \ref{Thm:M} and \ref{Thm:alpha*} different from both \cite{fouque2018aoptimal} and \cite{han2019mean}.
	\item We also consider the exponential utility. The corresponding wealth process is not necessarily bounded below. It complicates the proof in Theorem \ref{Thm:Expu}. Moreover, we provide an interesting comparison between power utility and exponential utility, from the perspective of rough volatility.
	\item \cite{han2019mean} apply completion of squares technique which is tailor-made for mean-variance portfolio (MVP) problem, while this paper adopts martingale optimality principle. Mean-variance portfolio selection and expected utility theory are usually considered separately, due to the completely different objectives and solution methods. In addition, dynamic MVP may have to additionally address the issue of time inconsistency \citep{chen2019time,han2019time}.
\end{itemize}

By deriving the analytical optimal investment demand, we are able to discuss the effect of volatility roughness on investment decisions. In rough Heston market, our sensitivity analysis suggests that investors with a power utility demand less on the stock market if the stock volatility is rougher whereas investors with an exponential utility demand more although both types of investors are risk averse. We further expand our analysis to parameters calibrated to simulated option data. The behaviors remain different for these two types of investors, documenting a significant difference in economic results for these two utility functions. 

The rest of this paper is organized as follows. We present the problem formulation in Section \ref{Sec:Pro} and solve the problem by the martingale optimality principle in Section \ref{Sec:Sol}. Section \ref{Sec:Num} offers numerical illustration for the investment demand under the rough Heston model. Section \ref{Sec:Con} concludes. All technical proofs are put into Appendix \ref{Appednix}.

\section{Problem formulation}\label{Sec:Pro}
Let $(\Omega, \cF, \p)$ be a complete probability space, with a filtration $\F = \{ \cF_t \}_{ 0 \leq t \leq T}$ satisfying the usual conditions, supporting a two-dimensional Brownian motion $W = (W_1, W_2)$. The filtration $\F$ is not necessarily the augmented filtration generated by $W$.

Denote a kernel $K(\cdot) \in L^2_{loc} (\R_+, \R)$ where $\R_+ = \{ t \in \R | t \geq 0\}$. Suppose the standing Assumption \ref{Assum:K} holds throughout the paper, in line with \cite{abi2017affine,keller2018affine,han2019mean,han2019time}. Recall that a function $f$ is completely monotone on $(0, \infty)$, if it is infinitely differentiable on $(0, \infty)$ and $(-1)^k f^{(k)}(t) \geq 0$ for all $ t > 0 $, $k = 0, 1, ...$. Assumption \ref{Assum:K} is satisfied by positive constant kernels, fractional kernels and exponential kernels, with proper parameters, see \cite{abi2017affine}.

\begin{assumption}\label{Assum:K}
The kernel $K$ is strictly positive and completely monotone on $(0, \infty)$. There is $\delta \in(0,2]$ such that $\int_{0}^{h} K(t)^{2} d t=O\left(h^\delta \right)$ and $\int_{0}^{T}(K(t+h)-K(t))^{2} d t=O(h^\delta)$ for every $T<\infty$. 
\end{assumption}

The convolution $K*L$ for a measurable kernel $K$ on $\R_+$ and a measure $L$ on $\R_+$ of locally bounded variation is defined by
\begin{equation}
(K*L)(t) = \int^t_0 K(t-s)L(ds)
\end{equation}
for $t>0$ under proper conditions. The integral is extended to $t=0$ by right-continuity if possible. If $F$ is a function on $\R_+$, let
\begin{equation}
(K*F)(t) = \int_0^t K(t-s) F(s) ds.
\end{equation}

For a $1$-dimensional continuous local martingale $W$, the convolution between $K$ and $W$ is defined as
\begin{equation}
(K*dW)_t = \int_0^t K(t-s)dW_s.
\end{equation}

A measure $L$ on $\R_+$ is called {\em resolvent of the first kind} to $K$, if
\begin{equation}
K*L = L*K \equiv \rm{id}.
\end{equation}
The existence of a resolvent of the first kind is shown in \citet[Theorem 5.5.4]{gripenberg1990volterra} under the complete monotonicity assumption, imposed in Assumption \ref{Assum:K}. Alternative conditions for the existence are given in \citet[Theorem 5.5.5]{gripenberg1990volterra}.

Kernel $R$ is called the {\em resolvent}, or {\em resolvent of the second kind}, to $K$ if
\begin{equation}
K*R = R*K = K - R.
\end{equation}
The resolvent always exists and is unique by \citet[Theorem 2.3.1]{gripenberg1990volterra}. Further properties of these definitions can be found in \cite{gripenberg1990volterra, abi2017affine}. Examples of kernels are available at \citet[Table 1]{abi2017affine}.

The variance process of the Volterra Heston model is defined as
\begin{equation}\label{vol}
V_{t}=V_{0}+ \kappa \int_{0}^{t} K(t-s)\left(\phi -V_{s}\right) d s + \int_{0}^{t} K(t-s) \sigma \sqrt{V_{s}} d B_{s},
\end{equation}
where $ dB_s = \rho dW_{1s} + \sqrt{1 - \rho^2} dW_{2s} $ and $V_0, \kappa, \phi, \sigma$ are positive constants. The correlation $\rho$ between stock price and variance is also assumed constant. The process in (\ref{vol}) is non-Markovian and non-semimartingale in general. Rough Heston model in \cite{eleuch2019char, eleuch2018perfect} is a special case of (\ref{vol}) with $K(t) = \frac{t^{H - 1/2}}{\Gamma(H + 1/2)}$, $H \in (0, 1/2]$. An alternative definition for Heston model with rough paths in \cite{guennoun2018asym} is known as fractional Heston model.

Suppose there is a risk-free asset with deterministic bounded risk-free rate $r_t >0$. Following  \cite{abi2017affine, kraft2005optimal}, we assume the risky asset (stock or index) price $S_t$ follows
\begin{equation}\label{stock}
dS_t = S_t (r_t + \theta V_t) dt + S_t \sqrt{V_t} dW_{1t}, \quad S_0 > 0,
\end{equation}
with constant $\theta \neq 0$. Then the market price of risk (risk premium) is given by $\theta \sqrt{V_t}$. 

We need the following existence and uniqueness result.
\begin{theorem}
	(Theorem 7.1 in \cite{abi2017affine}) Under Assumption \ref{Assum:K}, the stochastic Volterra equation (\ref{vol})-(\ref{stock}) has a unique in law $\R_+ \times \R_+$-valued continuous weak solution for any initial condition $(S_0, V_0) \in \R_+ \times \R_+$.
\end{theorem}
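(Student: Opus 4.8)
The plan is to decouple the system: since the stock price $S$ is driven by the variance process $V$ through an explicit geometric structure, I would first construct $V$ as a nonnegative continuous weak solution and then recover $S$ by a stochastic exponential. The essential difficulty lives entirely in the variance equation (\ref{vol}), because its diffusion coefficient $\sigma\sqrt{V_s}$ is only H\"older-$\tfrac12$ continuous near the boundary $V=0$, and the convolution kernel $K$ destroys both the Markov and the semimartingale structure, so neither classical Lipschitz fixed-point theory nor a Yamada--Watanabe type pathwise uniqueness applies directly.

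For existence I would use a regularization-and-tightness scheme. First replace $\sqrt{\cdot}$ by a family of globally Lipschitz functions $\sigma_n(\cdot)$ that coincide with $\sigma\sqrt{\cdot}$ away from the origin and decrease to it, so that for each $n$ the Volterra SDE has Lipschitz coefficients and admits a unique strong solution $V^n$ via a Picard iteration in $L^2$ (here $K \in L^2_{loc}$ is exactly what makes the convolution operator bounded). Then I would derive uniform moment bounds $\sup_n \E[\sup_{t\le T}|V^n_t|^p] < \infty$ and, crucially, increment estimates of the form $\E|V^n_{t+h}-V^n_t|^p \lesssim h^{p\delta/2}$; the two $O(h^\delta)$ hypotheses in Assumption \ref{Assum:K} are precisely what feed these bounds through the drift and the stochastic convolution. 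By the Kolmogorov continuity criterion these estimates yield tightness of $(V^n)$ in $C([0,T];\R)$, so that along a subsequence $V^n$ converges weakly to a limit $V$.

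The delicate points in the limit are (i) identifying $V$ as a solution of (\ref{vol}) and (ii) showing $V_t \ge 0$, which is what makes $\sqrt{V_t}$ meaningful in the first place. For (ii) I would exploit the strict positivity and complete monotonicity of $K$ in Assumption \ref{Assum:K}: these properties guarantee that the resolvent of the first kind $L$ is a nonnegative measure, which underlies a stochastic invariance/comparison argument forcing the approximants, and hence the limit, to remain in $\R_+$, nonnegativity being preserved under weak limits since $\R_+$ is closed. For (i) I would pass to the limit in the martingale problem associated with (\ref{vol}), using the uniform moment bounds to obtain convergence of the drift and the stochastic convolution. Once $V \ge 0$ is a continuous weak solution, the stock is recovered as $S_t = S_0 \exp\big(\int_0^t (r_s + \theta V_s - \tfrac12 V_s)\,ds + \int_0^t \sqrt{V_s}\,dW_{1s}\big)$, a well-defined strictly positive continuous process because $r$ is bounded and $V$ has finite moments; thus $(S,V)$ is the desired $\R_+\times\R_+$-valued continuous weak solution.

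For uniqueness in law I would invoke the affine structure. The pair $(\log S, V)$ is an affine Volterra process, so I would establish the exponential-affine transform formula: for suitable $u$ the conditional Fourier--Laplace functional $\E[\exp(u\log S_T + \int_t^T f(s)V_s\,ds)\mid \cF_t]$ equals $\exp$ of a deterministic expression driven by the solution $\psi$ of a Riccati--Volterra equation. Because that Riccati--Volterra equation has a unique solution under the regularity of $K$, these transforms pin down all finite-dimensional distributions of $(\log S, V)$, so the law of $(S,V)$ is unique. The main obstacle throughout is the non-Lipschitz square-root coefficient at the boundary $V=0$: it is what rules out strong/pathwise uniqueness and forces the weak, transform-based route, and it makes the nonnegativity argument in step (ii) the technical heart of the proof.
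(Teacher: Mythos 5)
This theorem is not proved in the paper at all: it is imported verbatim as Theorem 7.1 of \cite{abi2017affine}, so there is no in-paper argument to compare your proposal against. Judged against the proof in that reference, your outline reconstructs the actual strategy quite faithfully: existence of the variance process via Lipschitz regularization of the square root, uniform moment and increment bounds of order $h^{p\delta/2}$ fed by the two $O(h^\delta)$ conditions in Assumption \ref{Assum:K}, tightness and Kolmogorov's criterion, preservation of $\R_+$ via the nonnegativity of the resolvent of the first kind under complete monotonicity (this invariance step is indeed the technical heart), recovery of $S$ as a stochastic exponential, and uniqueness in law via the exponential-affine transform and the Riccati--Volterra equations. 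Two caveats are worth making explicit. First, for the uniqueness step the affine transform formula must be established for an \emph{arbitrary} weak solution of (\ref{vol})--(\ref{stock}), not merely for the one you constructed; this requires showing that the exponential local martingale arising in the derivation is a true martingale for every solution (the role played by Lemma 7.3 in \cite{abi2017affine}, which the present paper also invokes elsewhere), and one must then iterate the conditional transform over time points to pin down all finite-dimensional distributions, not just one-dimensional marginals. Second, your passage to the limit in the martingale problem glosses over the fact that $V\mapsto\sqrt{V}$ is continuous but the convolution structure means the limit identification is done equation-by-equation at fixed $t$ using the $L^2$ convergence of the stochastic convolutions rather than through a generator. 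Neither caveat is a flaw in the plan, only in its level of detail; as a blind reconstruction of a cited result the proposal is sound.
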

 For strong uniqueness, we mention \citet[Proposition B.3]{abi2019multifactor} as a related result with kernel $K \in C^1([0, T], \R)$ and \citet[Proposition 8.1]{mytnik2015uniqueness} for certain Volterra integral equations with smooth kernels. The strong uniqueness of (\ref{vol})-(\ref{stock}) is left open for singular kernels. For weak solutions, Brownian motion is also a part of the solution. However, expected utility only depends on the expectation of the wealth process. In the sequel, we fix a version of the solution $(S, V, W_1, W_2)$ to (\ref{vol})-(\ref{stock}) as other solutions have the same law.

Merton problem aims at
\begin{equation}\label{Genobj}
\sup_{ \alpha(\cdot) \in \cA } \E\left[ U(X_T) \right],
\end{equation}
where $\alpha$ is an admissible investment strategy introduced later and $X$ is the corresponding wealth. $U(\cdot)$ is a utility function. Specifically, we consider power utility and exponential utility in this paper. The classic martingale optimality principle, see, e.g., \cite{hu2005utility}, \citet[Section 6.6.1]{pham2009book} or \cite{jeanblanc2012mean}, states that the Problem (\ref{Genobj}) can be solved by constructing a family of processes $\{ J^\alpha_t \}_{ t \in [0, T]}$, $\alpha \in \cA$, satisfying conditions:
\begin{enumerate}[label={(\arabic*).}]
	\item $J^\alpha_T = U(X_T)$ for all $\alpha \in \cA$; 
	\item $J^\alpha_0$ is a constant, independent of  $\alpha \in \cA$;
	\item $J^\alpha_t$ is a supermartingale for all $\alpha \in \cA$, and there exists  $\alpha^* \in \cA$ such that $J^{\alpha^*}$ is a martingale.
\end{enumerate}
Indeed, if we can find $J^\alpha_t$, then for all $\alpha \in \cA$,
\begin{equation*}
\E[ U(X_T) ] = \E[ J^\alpha_T ] \leq J^\alpha_0 = J^{\alpha^*}_0 = \E[ J^{\alpha^*}_T]  = \E[ U(X^*_T)],
\end{equation*}
where $X^*$ is the wealth process under $\alpha^*$.

\section{Optimal strategy}\label{Sec:Sol}
\subsection{Power utility}
Let $\alpha_t \triangleq \sqrt{V_t} \pi_t$ be the investment strategy, where $\pi_t$ is the proportion of wealth invested in the stock. Then, the wealth process $X_t$ reads
\begin{equation}\label{Eq:wealth}
d X_t = \big(r_t + \theta \sqrt{V_t} \alpha_t \big)  X_t dt + \alpha_t X_t dW_{1t}, \; X_0 = x_0 > 0.
\end{equation}

\begin{definition}
	An investment strategy $\alpha(\cdot)$ is said to be admissible if 
	\begin{enumerate}[label={(\arabic*).}]
		\item $\alpha(\cdot)$ is $\F$-adapted and $\int^T_0 \alpha^2_s ds < \infty$, $\p$-$\as$;
		\item the wealth process (\ref{Eq:wealth}) has a unique solution in the sense of \citet[Chapter 1, Definition 6.15]{yong1999book}, with $\p$-$\as$ continuous paths;
		\item $X_t \geq 0$,  $\forall \; t \in [0, T]$, $\p$-$\as$;
		\item $\E\Big[ \frac{1}{\gamma} X^\gamma_T \Big] < \infty$,  $0 < \gamma < 1$.
	\end{enumerate}
	The set of all admissible investment strategies is denoted as $\cA$.
\end{definition}

We are interested in the power utility optimization:
\begin{equation}\label{obj}
\sup_{ \alpha(\cdot) \in \cA } \E\left[ \frac{1}{\gamma} X^\gamma_T \right], \quad 0 < \gamma < 1.
\end{equation}
To ease notation burden, we simply write $X$, instead of $X^{x_0, \alpha}$, as the wealth process (\ref{Eq:wealth}) under $\alpha \in \cA$ with initial condition $X_0 = x_0 > 0$. 
 
To construct $J^\alpha_t$, we introduce a new probability measure $ \tilde \p$ together with $\tilde W_{1t} \triangleq W_{1t}  - \frac{\gamma \theta}{1 - \gamma}  \int^t_0 \sqrt{V_s} ds$ as the new Brownian motion by \citet[Lemma 7.1]{abi2017affine}. Under $\tilde \p$, 
\begin{equation}
V_{t}=V_{0}+ \int_{0}^{t} K(t-s)\left( \kappa \phi - \lambda V_{s}\right) d s + \int_{0}^{t} K(t-s) \sigma \sqrt{V_{s}} d \tilde B_{s},
\end{equation}
with $\lambda = \kappa - \frac{\gamma}{1 - \gamma} \rho  \theta \sigma$ and $d \tilde B_s = \rho d\tilde W_{1s} + \sqrt{1- \rho^2} d W_{2s}$.

Denote $\tilde \E[\cdot]$ and $\tilde \E_t[ \cdot ] = \tildeE[ \cdot | \cF_t]$ as the $\tilde \p$-expectation and conditional $\tilde \p$-expectation,  respectively. The forward variance under $\tilde \p$ is the conditional $\tilde \p$-expected variance, that is, $ \tildeE_t [ V_s ] \triangleq \xi_{t}(s)$. It is shown in \citet[Propsition 3.2]{keller2018affine} and \citet[Lemma 4.2]{abi2017affine} that 
\begin{equation}\label{Eq:xi}
\xi_{t}(s) = \tildeE \left[V_{s} | \mathcal{F}_{t}\right]=\xi_{0}(s)+\int_0^t \frac{1}{\lambda} R_{\lambda}(s-u) \sigma \sqrt{V_{u}} d \tilde B_{u},
\end{equation}
where
\begin{equation}
\xi_{0}(s) = \left(1-\int_{0}^{s} R_{\lambda}(u) d u\right) V_{0} + \frac{\kappa \phi}{\lambda} \int_{0}^{s} R_{\lambda}(u) du,
\end{equation}
and $R_\lambda$ is the resolvent of $\lambda K$ such that 
\begin{equation}
\lambda K * R_\lambda = R_\lambda * ( \lambda K) = \lambda K - R_\lambda.
\end{equation}

Consider the stochastic process, 
\begin{equation}\label{Eq:M}
M_t = \exp \Big[ \int^T_t \big(\gamma r_s + \frac{\gamma \theta^2 \xi_t(s)}{2(1-\gamma)}  + \frac{c\sigma^2}{2} \psi^2(T-s) \xi_t(s) \big) ds \Big],
\end{equation}
where $ c =\frac{1 - \gamma}{ 1 -\gamma + \gamma \rho^2} $ and $\psi(\cdot)$ satisfies the Riccati-Volterra equation
\begin{align}\label{Eq:psi}
\psi(t) & = \int^t_0 K(t - s) \big[ \frac{\sigma^2}{2} \psi^2(s)  - \lambda \psi(s) + \frac{\gamma \theta^2}{2c(1-\gamma)} \big] ds. 
\end{align}
Existence and uniqueness of the solution to (\ref{Eq:psi}) are established in \citet[Lemma A.2 and A.3]{han2019mean} based on the results of \cite{gatheral2018affine, eleuch2018perfect}. Indeed, if $\lambda > 0$ and $\lambda^2 - \frac{\gamma \theta^2 \sigma^2}{c (1-\gamma)} > 0$, then (\ref{Eq:psi}) has a unique non-negative global solution. These assumptions are also in line with \citet[Proposition 5.2]{kraft2005optimal}. Furthermore, there is a tighter result for (\ref{Eq:psi}) with the fractional kernel in \citet[Theorem 3.2]{eleuch2018perfect}.

By considering $M_t$, we overcome the non-Markovian and non-semimartingale difficulty in the variance process (\ref{vol}). Main properties of $M_t$ are summarized in Theorem \ref{Thm:M}. We highlight that the $M_t$ in (\ref{Eq:M}) is unbounded so that it is very different from the one considered in \cite{fouque2018aoptimal, han2019mean}.
\begin{theorem}\label{Thm:M}
	Assume 
	\begin{equation}\label{Assum:M}
	\kappa^2 - 6 \frac{\gamma^2}{(1 - \gamma)^2} \theta^2 \sigma^2 > 0, \quad \lambda > 0, \quad \lambda^2 - 2 p \frac{ \gamma}{1 - \gamma} \theta^2 \sigma^2 > 0,
	\end{equation}
	for some $ p > 1/(2c)$. Then $M$ has following properties:
	\begin{enumerate}[label={(\arabic*).}]
		\item $M_t \geq l > 0$ for some positive constant $l$. And $\E\big[ \sup_{ t \in [0, T]} |M_t|^p \big] < \infty$;
		\item Apply It\^o's lemma to $M$ on $t$, then
		\begin{align}
		dM_t =& - \big[ \gamma r_t + \frac{\gamma}{2(1-\gamma)}\theta^2 V_t \big] M_t dt - \frac{\gamma}{2(1-\gamma)} \big[ 2 \theta \sqrt{V_t} U_{1t} + \frac{U^2_{1t}}{M_t} \big] dt + U_{1t} dW_{1t} + U_{2t} dW_{2t},
		\end{align}
		where
		\begin{align}
		U_{1t} &= \rho c \sigma M_t \sqrt{V_t} \psi(T-t), \label{Eq:U1simp}\\
		U_{2t} &= \sqrt{1-\rho^2} c \sigma M_t \sqrt{V_t} \psi(T-t). \label{Eq:U2simp}
		\end{align}
		\item $\E\Big[ \big(\int^T_0 U^2_{it} dt \big)^{p/4} \Big] < \infty$ for $ i = 1, 2$.
	\end{enumerate}
\end{theorem}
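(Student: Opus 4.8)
\emph{Overview and the lower bound.} The plan is to treat the three claims in turn, with the global moment bound in (1) as the analytic core and the identities in (2)--(3) following by differentiation and elementary estimates. First note that (\ref{Assum:M}) forces $\lambda>0$ and, because $p>1/(2c)$ gives $2p>1/c$, also $\lambda^2-\frac{\gamma\theta^2\sigma^2}{c(1-\gamma)}>\lambda^2-2p\frac{\gamma}{1-\gamma}\theta^2\sigma^2>0$; this is exactly the regime in which (\ref{Eq:psi}) admits a unique non-negative global solution $\psi$, which is then continuous and hence bounded on $[0,T]$. Since $r_s>0$, $c>0$, and $\xi_t(s)=\tildeE_t[V_s]\ge 0$ (as $V\ge 0$), every term in the exponent of (\ref{Eq:M}) is non-negative, so $M_t\ge 1=:l$.

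\emph{The moment bound (main obstacle).} The hard part is $\E[\sup_{t}M_t^p]<\infty$. Writing $g(s):=\frac{\gamma\theta^2}{2(1-\gamma)}+\frac{c\sigma^2}{2}\psi^2(T-s)$ and $\bar g:=\sup_{[0,T]}g=\frac{\gamma\theta^2}{2(1-\gamma)}+\frac{c\sigma^2}{2}\|\psi\|_\infty^2<\infty$, and $\Lambda:=\int_0^T V_s\,ds$, boundedness of $g$ and non-negativity of $\xi_t(\cdot)$ give $\int_t^T g(s)\xi_t(s)\,ds\le \bar g\,\tildeE_t[\int_t^T V_s\,ds]\le \bar g\,\tildeE_t[\Lambda]$, so by conditional Jensen $M_t\le e^{C_0}\tildeE_t[e^{\bar g\Lambda}]$ with $C_0=\gamma\|r\|_\infty T$. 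The process $t\mapsto\tildeE_t[e^{\bar g\Lambda}]$ is a $\tilde\p$-martingale, so Doob's maximal inequality controls its supremum in $L^{a}(\tilde\p)$ for a suitable $a>1$; passing from $\tilde\p$ to $\p$ via the density $\frac{d\p}{d\tilde\p}=\mathcal Z_T^{-1}$ coming from the Girsanov change $\tilde W_{1}=W_1-\frac{\gamma\theta}{1-\gamma}\int_0^\cdot\sqrt{V_s}\,ds$, and applying H\"older, reduces everything to exponential moments of $\Lambda$ together with a negative-power moment $\tildeE[\mathcal Z_T^{-a'}]$. I would establish these through the exponential-affine transform for Volterra square-root processes \citep{abi2017affine,keller2018affine}, for which finiteness is equivalent to non-explosion on $[0,T]$ of an associated scalar Riccati--Volterra equation; the two inequalities in (\ref{Assum:M})---the $\lambda$-condition at level $2p$ and the $\kappa$-condition with the factor $6$---are precisely the non-explosion thresholds for the $\tilde\p$- and $\p$-exponential moments at the exponents produced by this Doob--H\"older bookkeeping. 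This exponential integrability is the step I expect to be delicate; once it is in place the bound follows.

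\emph{It\^o's lemma.} For the dynamics write $M_t=e^{A_t}$ with $A_t=\int_t^T(\gamma r_s+g(s)\xi_t(s))\,ds$ and differentiate in $t$. The moving lower limit contributes $-(\gamma r_t+g(t)V_t)\,dt$, using $\xi_t(t)=V_t$, while the $t$-dependence of $\xi_t(s)$ contributes, through the representation (\ref{Eq:xi}) with $d_t\xi_t(s)=\frac1\lambda R_\lambda(s-t)\sigma\sqrt{V_t}\,d\tilde B_t$ and stochastic Fubini, a diffusion term $\Sigma_t\,d\tilde B_t$ with $\Sigma_t=\frac{\sigma\sqrt{V_t}}{\lambda}\int_t^T g(s)R_\lambda(s-t)\,ds$. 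The crucial simplification is $g(s)=c[\frac{\sigma^2}{2}\psi^2(T-s)+\frac{\gamma\theta^2}{2c(1-\gamma)}]=c[b(T-s)+\lambda\psi(T-s)]$, where $b$ is the integrand of (\ref{Eq:psi}) so that $\psi=K*b$; then the resolvent identity $R_\lambda*(\lambda K)=\lambda K-R_\lambda$ collapses $(R_\lambda*b)+\lambda(R_\lambda*\psi)$ to $\lambda\psi$, giving $\Sigma_t=c\sigma\sqrt{V_t}\psi(T-t)$. Expressing $d\tilde B_t$ in terms of $dW_{1t},dW_{2t}$ recovers $U_{1t},U_{2t}$ as in (\ref{Eq:U1simp})--(\ref{Eq:U2simp}); finally $dM_t=M_t\,dA_t+\frac12 M_t\Sigma_t^2\,dt$ together with the defining value $c=\frac{1-\gamma}{1-\gamma+\gamma\rho^2}$ makes the two $\psi^2V_t$ contributions (the It\^o correction and the $\frac{c\sigma^2}{2}\psi^2$ drift) combine into $-\frac{\gamma}{2(1-\gamma)}U_{1t}^2/M_t$, yielding the stated form.

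\emph{The stochastic-integral moment.} Claim (3) is then routine given (1). From (\ref{Eq:U1simp})--(\ref{Eq:U2simp}), $U_{it}^2\le C M_t^2 V_t$ with $C=c^2\sigma^2\|\psi\|_\infty^2$, so $\big(\int_0^T U_{it}^2\,dt\big)^{p/4}\le C^{p/4}(\sup_t M_t)^{p/2}\big(\int_0^T V_t\,dt\big)^{p/4}$. Cauchy--Schwarz splits the expectation into $(\E[\sup_t M_t^p])^{1/2}$, finite by (1), and $(\E[(\int_0^T V_t\,dt)^{p/2}])^{1/2}$, finite because the Volterra square-root process has finite polynomial moments of every order \citep{abi2017affine}.
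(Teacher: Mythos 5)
Your lower bound, your It\^o computation (stochastic Fubini plus the resolvent identity collapsing $\int_t^T g(s)\frac{1}{\lambda}R_\lambda(s-t)\,ds$ to $c\,\psi(T-t)$), and your treatment of claim (3) all essentially coincide with the paper's argument and are fine; indeed your observation that $M_t\ge 1$ follows directly from non-negativity of the exponent is slightly more direct than the paper's route. The genuine gap is in the moment bound, which you yourself flag as the delicate step. You majorize the exponent of $M_t$ by $\bar g\,\tildeE_t[\Lambda]$ with $\bar g=\frac{\gamma\theta^2}{2(1-\gamma)}+\frac{c\sigma^2}{2}\|\psi\|_\infty^2$ and then assert that the inequalities in (\ref{Assum:M}) are ``precisely the non-explosion thresholds'' for the exponential moments produced by your Doob--H\"older bookkeeping. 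They are not: your sup-norm bound on the $\psi^2\xi_t(s)$ term inflates the required exponent from $\frac{\gamma\theta^2}{2c(1-\gamma)}$ to $\bar g$, which exceeds it by $\frac{c\sigma^2}{2}\|\psi\|_\infty^2$ (a quantity depending on the solution $\psi$ itself), so the condition $\lambda^2-2p\frac{\gamma}{1-\gamma}\theta^2\sigma^2>0$ does not guarantee $\tildeE[e^{a\bar g\Lambda}]<\infty$ for the $a>1$ that Doob requires. The missing idea is the paper's identity (\ref{Eq:Mequi}), namely $M_t^{1/c}=e^{\frac{\gamma}{c}\int_t^T r_s\,ds}\,\tildeE_t\big[\exp\big(\frac{\gamma\theta^2}{2c(1-\gamma)}\int_t^T V_s\,ds\big)\big]$, obtained from the exponential-affine transform of \citet[Theorem 4.3]{abi2017affine}: the $\psi^2\xi_t(s)$ term is absorbed \emph{exactly} into this conditional expectation rather than estimated, so that Doob's inequality applied to the $\tilde\p$-martingale $\tildeE_t[L]$ at power $2pc>1$ (this is where $p>1/(2c)$ enters) lands exactly on the exponent $\frac{p\gamma\theta^2}{1-\gamma}$ matched by the third condition in (\ref{Assum:M}), while the condition $\kappa^2-6\frac{\gamma^2}{(1-\gamma)^2}\theta^2\sigma^2>0$ is calibrated to $\E[R^{-1}]$ for the Girsanov density via one further H\"older step.

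In short: your overall architecture (reduce to a conditional expectation, Doob, change of measure, H\"older, Riccati non-explosion) is the right one and matches the paper's, but as written the reduction loses the exact affine structure at the first inequality, and the stated hypotheses no longer suffice to close the estimate. Replacing the crude bound $M_t\le e^{C_0}\tildeE_t[e^{\bar g\Lambda}]$ by the exact representation (\ref{Eq:Mequi}) repairs the proof and is precisely what the paper does.
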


Now we are ready to give the Ansatz for $J^\alpha_t$. Consider
\begin{equation}
J^\alpha_t = \frac{X^\gamma_t}{\gamma} M_t.
\end{equation}
Then we have the following verification result.
\begin{theorem}\label{Thm:alpha*}
	Suppose the conditions (\ref{Assum:M}) in Theorem \ref{Thm:M} hold and $\kappa^2 - 2 \eta \sigma^2 > 0$, where
	\begin{equation}
	\eta = \max\Big\{ 2q |\theta| \sup_{ t \in [0, T]} |A_t|, (8q^2 - 2q)  \sup_{ t \in [0, T]} |A_t|^2 \Big\},
	\end{equation}
	for some $q > 1$ and $A_t = \frac{1}{1 -\gamma} \big[ \theta + \rho c \sigma \psi(T-t) \big]$. Then $J^\alpha_t = \frac{X^\gamma_t}{\gamma} M_t$ satisfies the martingale optimality principle, and an optimal strategy is given by 
	\begin{equation}\label{Eq:alpha*}
	\alpha^*_t = A_t \sqrt{V_t}.
	\end{equation}
	 Moreover,
	\begin{equation}
	\E\Big[ \sup_{ t \in [0, T]} |X^*_t|^q \Big] < \infty,
	\end{equation}
	and $\alpha^*$ is admissible.
\end{theorem}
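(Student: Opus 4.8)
The plan is to verify the three conditions of the martingale optimality principle for $J^\alpha_t = \frac{X^\gamma_t}{\gamma}M_t$ and then to supply the integrability needed to turn the local statements into genuine (super)martingale statements. Conditions (1) and (2) are immediate: since the integral in (\ref{Eq:M}) from $T$ to $T$ vanishes, $M_T=1$ and hence $J^\alpha_T=\frac{1}{\gamma}X^\gamma_T$; and since $M_0$ is determined by the deterministic forward curve $\xi_0(\cdot)$ and $\psi$, the quantity $J^\alpha_0=\frac{x_0^\gamma}{\gamma}M_0$ is a constant independent of $\alpha$. The substance is condition (3).

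For condition (3), first I would apply It\^o's product rule to $J^\alpha_t=\frac{1}{\gamma}X^\gamma_t M_t$, using the wealth dynamics (\ref{Eq:wealth}) for $d(X^\gamma_t)$ and the expression for $dM_t$ supplied by Theorem \ref{Thm:M}, together with the cross-variation $d\langle X^\gamma, M\rangle_t=\gamma\alpha_t U_{1t}X^\gamma_t\,dt$. After substituting $U_{1t}=\rho c\sigma M_t\sqrt{V_t}\psi(T-t)$ from (\ref{Eq:U1simp}), the $r_t$ terms cancel and the drift of $J^\alpha$ factors as $\frac{X^\gamma_t M_t}{\gamma}\,Q(\alpha_t)$, where $Q$ is a quadratic in $\alpha_t$ whose leading coefficient $-\tfrac12\gamma(1-\gamma)$ is negative. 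Completing the square, I expect
\begin{equation*}
Q(\alpha_t) = -\frac{\gamma(1-\gamma)}{2}\big(\alpha_t - A_t\sqrt{V_t}\big)^2,
\end{equation*}
with maximiser $A_t\sqrt{V_t}=\frac{1}{1-\gamma}[\theta+\rho c\sigma\psi(T-t)]\sqrt{V_t}$, which is exactly (\ref{Eq:alpha*}); the vanishing of the constant term is forced by the precise shape of the $dM_t$ drift, into which the Riccati-Volterra equation (\ref{Eq:psi}) and the constant $c$ have already been absorbed through Theorem \ref{Thm:M}. Thus the drift of $J^\alpha$ equals $-\frac{1-\gamma}{2}X^\gamma_t M_t(\alpha_t-A_t\sqrt{V_t})^2\le 0$, vanishing precisely when $\alpha_t=\alpha^*_t$.

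Since $X_t\ge 0$ for admissible $\alpha$ and $M_t\ge l>0$ by Theorem \ref{Thm:M}(1), the process $J^\alpha$ is nonnegative; being a continuous local martingale plus a non-increasing finite-variation part, it is a supermartingale by Fatou's lemma, which gives the supermartingale half of condition (3) for every $\alpha\in\cA$. For $\alpha^*$ the drift vanishes, so $J^{\alpha^*}$ is a nonnegative continuous local martingale, and to upgrade it to a true martingale it suffices to show it is of class (D), for which I would prove $\E\big[\sup_{t\in[0,T]}|J^{\alpha^*}_t|\big]<\infty$. Bounding $\sup_t J^{\alpha^*}_t\le\frac{1}{\gamma}\big(\sup_t(X^*_t)^\gamma\big)\big(\sup_t M_t\big)$ and applying H\"older together with $\E\big[\sup_{t\in[0,T]}|M_t|^p\big]<\infty$ from Theorem \ref{Thm:M}(1), this reduces to the moment estimate $\E[\sup_t|X^*_t|^q]<\infty$ asserted in the statement, with $q$ and $p$ taken conjugate-compatible with $\gamma$.

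The main obstacle is exactly this moment estimate. Under $\alpha^*$ the wealth solves the linear SDE
\begin{equation*}
dX^*_t=(r_t+\theta A_tV_t)X^*_t\,dt+A_t\sqrt{V_t}\,X^*_t\,dW_{1t},
\end{equation*}
whose solution is the stochastic exponential $X^*_t=x_0\exp\big(\int_0^t(r_s+\theta A_sV_s-\tfrac12 A_s^2V_s)\,ds+\int_0^t A_s\sqrt{V_s}\,dW_{1s}\big)$, so that $(X^*_t)^q$ is the product of an exponential local martingale and an exponential of a drift linear in the integrated variance $\int_0^t V_s\,ds$. I would bound $\E[\sup_t(X^*_t)^q]$ by splitting these two factors via Cauchy-Schwarz and controlling the martingale factor by the Burkholder-Davis-Gundy inequality; both manipulations double the exponents acting on $\int_0^T V_s\,ds$ and produce precisely the two competing strengths $2q|\theta|\sup_t|A_t|$ and $(8q^2-2q)\sup_t|A_t|^2$ collected in the definition of $\eta$. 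Everything then reduces to a bound of the type $\E\big[\exp\big(\eta\int_0^T V_s\,ds\big)\big]<\infty$ for the Volterra square-root variance, and the hypothesis $\kappa^2-2\eta\sigma^2>0$ is exactly the threshold (real roots of the associated Riccati quadratic) under which such exponential moments of the integrated rough variance are finite, by the affine Volterra estimates of \cite{abi2017affine,han2019mean}; this is the step demanding the most care. Admissibility of $\alpha^*$ is then routine: $\alpha^*_t=A_t\sqrt{V_t}$ is $\F$-adapted with $\int_0^T(\alpha^*_s)^2\,ds\le\sup_t A_t^2\int_0^T V_s\,ds<\infty$ a.s.\ since $A$ is bounded and deterministic, the linear SDE has a unique positive continuous solution, and $\E[\frac{1}{\gamma}(X^*_T)^\gamma]<\infty$ follows a fortiori from the $q$-th moment bound.
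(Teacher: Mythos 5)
Your proposal follows the paper's own strategy almost step for step: the same Ansatz $J^\alpha_t=\frac{X^\gamma_t}{\gamma}M_t$, the same It\^o computation producing a drift of the form $J^\alpha_t F(\alpha_t,t)$ with $F$ a concave quadratic in $\alpha$ (your completed square $-\frac{\gamma(1-\gamma)}{2}(\alpha-A_t\sqrt{V_t})^2$ is exactly the paper's $F$ after completing the square), the same Fatou/localization argument for the supermartingale property using $J^\alpha\ge 0$, and the same moment estimate for $X^*$ via Cauchy--Schwarz and exponential moments of $\int_0^T V_s\,ds$, yielding precisely the two constants $2q|\theta|\sup_t|A_t|$ and $(8q^2-2q)\sup_t|A_t|^2$ in $\eta$ and the threshold $\kappa^2-2\eta\sigma^2>0$. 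The one place you genuinely deviate is the true-martingale property of $J^{\alpha^*}$: the paper disposes of this by citing \citet[Lemma 7.3]{abi2017affine}, which asserts directly that the stochastic exponential $G_t$ appearing in the factorization $J^{\alpha^*}_t=\frac{M_0x_0^\gamma}{\gamma}e^{\int_0^tF(\alpha^*_s,s)ds}G_t$ is a true martingale for the Volterra Heston model, whereas you propose a class-(D) domination argument via $\E[\sup_t J^{\alpha^*}_t]<\infty$, obtained by H\"older from $\E[\sup_t|X^*_t|^q]<\infty$ and $\E[\sup_t M_t^p]<\infty$. Your route is more self-contained but carries a bookkeeping cost you only gesture at (``conjugate-compatible''): splitting $(X^*_t)^\gamma M_t$ requires the conjugate exponent $q/(q-\gamma)$ on $M$ to be at most the $p$ supplied by Theorem \ref{Thm:M}, and the stated hypotheses only guarantee \emph{some} $q>1$ and \emph{some} $p>1/(2c)$, not that they are compatible; you would need to either impose this compatibility explicitly or fall back on the cited lemma as the paper does. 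Apart from that caveat the argument is sound.
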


\subsection{Exponential utility}
In this subsection, we consider the exponential utility case. With a slightly different formulation, let $u_t$ be the investment strategy. The wealth process $X_t$ reads
\begin{equation}\label{Eq:Expwealth}
d X_t = \big(r_t X_t + \theta \sqrt{V_t} u_t \big) dt + u_t dW_{1t}, \; X_0 = x_0 > 0.
\end{equation}
The following admissibility assumption is consistent with \cite{hu2005utility}.
\begin{definition}
	Let $\gamma > 0$, an investment strategy $u(\cdot)$ is said to be admissible if 
	\begin{enumerate}[label={(\arabic*).}]
		\item $u(\cdot)$ is $\F$-adapted and $ \E[\int^T_0 u^2_s ds] < \infty$;
		\item the wealth process (\ref{Eq:Expwealth}) has a unique solution in the sense of \citet[Chapter 1, Definition 6.15]{yong1999book}, with $\p$-$\as$ continuous paths;
		\item $\{\exp\big[ - \gamma e^{ \int^T_\tau r_v dv} X_\tau \big]: \tau \text{ stopping time with values in } [0, T] \}$ is a uniformly integrable family.
	\end{enumerate}
	The set of all admissible investment strategies is denoted as $\cU$.
\end{definition}
The investor now considers
\begin{equation}\label{Expobj}
\sup_{u(\cdot) \in \cU } \E\left[ - \frac{1}{\gamma} e^{- \gamma X_T} \right], \quad \gamma > 0.
\end{equation}
The solution method is similar to the power utility case. With slightly abuse of notations, we still use $\psi(\cdot)$, forward variance $\xi_t(s)$, $M$, etc. However, they are redefined and not mixed with counterparts in power utility case.

To construct $J^u_t$, we introduce a new probability measure $ \tilde \p$ together with $\tilde W_{1t} \triangleq W_{1t}  + \theta  \int^t_0 \sqrt{V_s} ds$ as the new Brownian motion and $\lambda = \kappa + \rho \sigma \theta$. Consider the stochastic process, 
\begin{equation}\label{Eq:ExpM}
M_t = \exp \Big[ \int^T_t \big(- \frac{\theta^2}{2} \xi_t(s) + \frac{\sigma^2(1 - \rho^2)}{2} \psi^2(T-s) \xi_t(s) \big) ds \Big],
\end{equation}
where  $\xi_t(s)$ is forward variance under $ \tilde \p$ and $\psi(\cdot)$ satisfies
\begin{align}\label{Eq:Exppsi}
\psi(t) & = \int^t_0 K(t - s) \big[ \frac{\sigma^2(1 - \rho^2)}{2} \psi^2(s)  - \lambda \psi(s) - \frac{\theta^2}{2} \big] ds. 
\end{align}
If $\lambda > 0$, then (\ref{Eq:Exppsi}) has a unique global solution.
\begin{theorem}\label{Thm:ExpM}
	Assume $\lambda > 0$. Then $M$ in (\ref{Eq:ExpM}) has following properties:
	\begin{enumerate}[label={(\arabic*).}]
		\item $M_t$ is essentially bounded. $0 < M_t \leq 1$, $\forall \; t \in [0, T]$, $\p$-$\as$;
		\item Apply It\^o's lemma to $M$ on $t$, then
		\begin{equation}\label{Eq:ItoExpM}
		dM_t = \frac{(\theta \sqrt{V_t} M_t + U_{1t})^2}{2 M_t} dt + U_{1t} dW_{1t} + U_{2t} dW_{2t},
		\end{equation}
		where
		\begin{equation}
		U_{1t} = \rho \sigma M_t \sqrt{V_t} \psi(T-t), \quad U_{2t} = \sqrt{1-\rho^2} \sigma M_t \sqrt{V_t} \psi(T-t).
		\end{equation}
		\item $\E\Big[ \big(\int^T_0 U^2_{it} dt \big)^{p/2} \Big] < \infty$ for $p \geq 1$ and $ i = 1, 2$.
	\end{enumerate}
\end{theorem}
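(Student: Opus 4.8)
The plan is to verify the three items in turn, paralleling the power-utility Theorem \ref{Thm:M} but exploiting that here $M\le 1$ is bounded. For item (1), positivity $M_t>0$ is immediate from the exponential. Writing the exponent in (\ref{Eq:ExpM}) as $\int_t^T h(s)\xi_t(s)\,ds$ with $h(s)=-\tfrac{\theta^2}{2}+\tfrac{\sigma^2(1-\rho^2)}{2}\psi^2(T-s)$ and recalling $\xi_t(s)=\tildeE_t[V_s]\ge 0$, it suffices to show $h\le 0$. I would obtain this from a sign estimate on the solution of (\ref{Eq:Exppsi}): the quadratic $g(x)=\tfrac{\sigma^2(1-\rho^2)}{2}x^2-\lambda x-\tfrac{\theta^2}{2}$ has roots $\psi_-<0<\psi_+$, and a comparison argument for Riccati--Volterra equations with nonnegative kernel (in the spirit of the existence analysis in \citep{han2019mean,eleuch2018perfect}) confines the solution to $\psi_-\le\psi\le 0$ on $[0,T]$. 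Since $g(\psi_-)=0$ and $\lambda>0$ give $\tfrac{\sigma^2(1-\rho^2)}{2}\psi_-^2=\lambda\psi_-+\tfrac{\theta^2}{2}<\tfrac{\theta^2}{2}$, we conclude $\tfrac{\sigma^2(1-\rho^2)}{2}\psi^2\le\tfrac{\theta^2}{2}$, hence $h\le 0$ and $0<M_t\le 1$.

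For item (2) I would differentiate $\log M_t=\int_t^T h(s)\xi_t(s)\,ds$ in $t$. The forward variance is a $\tilde\p$-martingale with $d_t\xi_t(s)=\tfrac{1}{\lambda}R_\lambda(s-t)\sigma\sqrt{V_t}\,d\tilde B_t$ (the exponential-case analogue of (\ref{Eq:xi})) and $\xi_t(t)=V_t$, so combining the boundary term with a stochastic-Fubini interchange yields
\[
d\log M_t=-h(t)V_t\,dt+\Big(\int_t^T h(s)\tfrac{1}{\lambda}R_\lambda(s-t)\,ds\Big)\sigma\sqrt{V_t}\,d\tilde B_t .
\]
The key algebraic step is to show the diffusion coefficient equals $\sigma\psi(T-t)$: the substitution $v=s-t$ turns the bracket into $\tfrac1\lambda(R_\lambda*F)(T-t)$ with $F=\tfrac{\sigma^2(1-\rho^2)}{2}\psi^2-\tfrac{\theta^2}{2}$, and the resolvent relation $\tfrac1\lambda R_\lambda=K-K*R_\lambda$ together with (\ref{Eq:Exppsi}), rewritten as $\psi+\lambda K*\psi=K*F$, gives $\tfrac1\lambda R_\lambda*F=K*F-R_\lambda*(K*F)=\psi$. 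Inserting $d\tilde B_t=\rho\,dW_{1t}+\sqrt{1-\rho^2}\,dW_{2t}+\rho\theta\sqrt{V_t}\,dt$ and applying It\^o to $M_t=e^{\log M_t}$ then produces the stated $U_{1t},U_{2t}$, while the drift collapses, after substituting $h$, to the perfect square $\tfrac12 M_t V_t\big(\theta+\rho\sigma\psi(T-t)\big)^2=\tfrac{(\theta\sqrt{V_t}M_t+U_{1t})^2}{2M_t}$, which is exactly (\ref{Eq:ItoExpM}).

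For item (3), item (1) gives $M_t\le 1$ and $\psi$ is continuous, hence bounded, on $[0,T]$, so $U_{it}^2\le C\,V_t$ for a constant $C$; thus $\big(\int_0^T U_{it}^2\,dt\big)^{p/2}\le C'\big(\int_0^T V_t\,dt\big)^{p/2}$, and by H\"older's inequality this is dominated by a multiple of $\sup_{t\le T}\E[V_t^{p/2}]$, which is finite for every $p\ge 1$ by the standard moment bounds for the nonnegative Volterra square-root process $V$ \citep{abi2017affine}. I expect the main obstacle to be item (2): rigorously justifying the stochastic-Fubini interchange and the differentiation of the $t$-dependent, martingale-valued integral $\log M_t$, and carrying out the resolvent/Riccati identification cleanly. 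The boundedness in item (1), which is the genuinely new feature relative to the unbounded power-utility $M$, reduces to the sign and size control of $\psi$ and should be comparatively routine, while item (3) is immediate once item (1) supplies $M_t\le 1$.
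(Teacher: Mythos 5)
Your items (2) and (3) match the paper's own argument: the paper also writes $M_t=e^{Z_t}$, differentiates using $d\xi_t(s)=\tfrac{1}{\lambda}R_\lambda(s-t)\sigma\sqrt{V_t}\,d\tilde B_t$ and the stochastic Fubini theorem, and collapses the diffusion coefficient to $\sigma\psi(T-t)\sqrt{V_t}$ via exactly the resolvent identity $\tfrac1\lambda R_\lambda*F=\psi$ that you derive (your convolution algebra, using $\tfrac1\lambda R_\lambda=K-K*R_\lambda$ and $\psi+\lambda K*\psi=K*F$, is correct); item (3) is then immediate from $M_t\le 1$, continuity of $\psi$ on $[0,T]$, and $\sup_{t\le T}\E[V_t^m]<\infty$ from \citet[Lemma 3.1]{abi2017affine}, which is all the paper does (by deferring to \citet{han2019mean}). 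The genuine divergence is item (1). The paper never estimates the integrand of (\ref{Eq:ExpM}): it invokes the distortion identity (\ref{Eq:TransM}), $M_t^{1-\rho^2}=\tildeE\big[e^{-\frac{\theta^2(1-\rho^2)}{2}\int_t^T V_s\,ds}\,\big|\,\cF_t\big]$ (an instance of the exponential--affine transform formula of \citet[Theorem 4.3]{abi2017affine} under $\tilde\p$), and reads $M_t\le1$ directly off $V\ge0$, treating $\rho^2=1$ separately; this needs no sign information on $\psi$ beyond the martingale property already required elsewhere. You instead bound the integrand pointwise through the confinement $\psi_-\le\psi\le0$, from which your chain $\tfrac{\sigma^2(1-\rho^2)}{2}\psi^2\le\tfrac{\sigma^2(1-\rho^2)}{2}\psi_-^2=\lambda\psi_-+\tfrac{\theta^2}{2}<\tfrac{\theta^2}{2}$ is correct and even gives the sharper pointwise fact $h<0$. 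The one soft spot is that this confinement is the load-bearing step and you only assert it ``in the spirit of'' the existence analyses: it is true for completely monotone kernels --- the solution of (\ref{Eq:Exppsi}) with $\lambda>0$ and nonpositive constant term is nonpositive, nonincreasing, and stays above the nonpositive root $\psi_-$ of the quadratic, by the comparison/monotonicity results for Riccati--Volterra equations in \citet{gatheral2018affine} and \citet{eleuch2018perfect} --- but it is a nontrivial lemma that your writeup does not actually prove. So your proof is correct modulo that citable omission; what it buys is explicit control of $\psi$ (potentially useful elsewhere), while the paper's measure-change identity gets $0<M_t\le1$ in one line.
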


Consider
\begin{equation}
J^u_t = - \frac{1}{\gamma} \exp \big[ - \gamma e^{\int^T_t r_sds} X_t\big] M_t.
\end{equation}
Then we have the following verification result.
\begin{theorem}\label{Thm:Expu}
	Suppose $\lambda > 0$ and $\kappa^2 - 2 \eta \sigma^2 > 0$ with
	\begin{equation}
	\eta =  \sup_{t \in [0, T]} \Big\{ 2 p^2 \gamma^2 e^{ \int^T_t 2 r_v dv} A^2_t + 2 p \gamma e^{ \int^T_t r_v dv} |\theta A_t| \Big\}, \text{ for some } p > 1,
	\end{equation}
	 where $A_t = \frac{1}{\gamma} e^{- \int^T_t r_v dv} \big[ \theta + \rho \sigma \psi(T-t) \big]$. Then $J^u_t$ satisfies the martingale optimality principle. An optimal strategy is given by 
	\begin{equation}\label{Eq:Expu}
	u^*_t = A_t \sqrt{V_t},
	\end{equation}
	and $u^*$ is admissible.
\end{theorem}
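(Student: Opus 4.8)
The plan is to verify the three conditions of the martingale optimality principle for $J^u_t=-\frac{1}{\gamma}\exp\big[-\gamma e^{\int_t^T r_sds}X_t\big]M_t$ and then to check that $u^*$ lies in $\cU$. Conditions (1) and (2) are immediate: at $t=T$ both $M_T=\exp[0]=1$ and $e^{\int_T^T r_sds}=1$, so $J^u_T=-\frac1\gamma e^{-\gamma X_T}=U(X_T)$ for every $u$; and $J^u_0=-\frac1\gamma\exp[-\gamma e^{\int_0^T r_sds}x_0]\,M_0$, where $M_0$ is a deterministic number built from the deterministic forward curve $\xi_0(s)$, hence $J^u_0$ is a constant independent of $u$. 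The substance is condition (3).

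For condition (3) I would set $Y_t=e^{\int_t^T r_sds}$ and $Z_t=Y_tX_t$, so that the discounted wealth satisfies $dZ_t=Y_t\theta\sqrt{V_t}\,u_t\,dt+Y_tu_t\,dW_{1t}$, the $r_tX_t$ term cancelling. Writing $J^u_t=G_tM_t$ with $G_t=-\frac1\gamma e^{-\gamma Z_t}$, I would apply It\^o's product rule, substitute the dynamics of $M$ from Theorem \ref{Thm:ExpM}, and note that only $W_1$ contributes to the cross term, giving $d\langle G,M\rangle_t=e^{-\gamma Z_t}Y_tu_tU_{1t}\,dt$. Setting $\beta_t=\theta+\rho\sigma\psi(T-t)$ and using $U_{1t}=\rho\sigma M_t\sqrt{V_t}\psi(T-t)$, the identity $\theta\sqrt{V_t}M_t+U_{1t}=M_t\sqrt{V_t}\,\beta_t$ makes the drift collapse, after completing the square in $u_t$, to
\[
e^{-\gamma Z_t}\Bigl(-\tfrac{\gamma M_t}{2}\bigl(Y_tu_t-\tfrac{1}{\gamma}\sqrt{V_t}\,\beta_t\bigr)^{2}\Bigr)\,dt,
\]
which is $\le0$ because $\gamma>0$ and $M_t>0$ (Theorem \ref{Thm:ExpM}(1)), and vanishes exactly when $Y_tu_t=\frac1\gamma\sqrt{V_t}\,\beta_t$, i.e. $u_t=\frac1\gamma e^{-\int_t^T r_vdv}\sqrt{V_t}\,\beta_t=A_t\sqrt{V_t}=u^*_t$. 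This simultaneously identifies the optimizer and shows that $J^u$ is a local supermartingale which is a local martingale under $u^*$.

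It remains to upgrade these local statements to genuine ones. Since $M_t\le1$, one has $|J^u_t|\le\frac1\gamma\exp[-\gamma e^{\int_t^T r_vdv}X_t]$, so admissibility condition (3) for $\cU$ says precisely that $\{J^u_\tau:\tau\text{ stopping time}\}$ is of class (D); a local supermartingale of class (D) is a true supermartingale, which handles every $u\in\cU$ at once. For the optimizer, once $u^*$ is shown admissible the same class-(D) property turns the local martingale $J^{u^*}$ into a genuine martingale, closing condition (3). Thus everything reduces to verifying $u^*\in\cU$.

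The crux, and the step I expect to be the main obstacle, is an exponential-moment bound on the discounted optimal wealth $Z^*_t=Y_tX^*_t$, which governs both the class-(D) property of $u^*$ and the integrability $\E[\int_0^T (u^*_s)^2ds]<\infty$. Under $u^*$ one has $dZ^*_t=Y_t\theta A_tV_t\,dt+Y_tA_t\sqrt{V_t}\,dW_{1t}$, so controlling $\E[\exp(-p\gamma Z^*_\tau)]$ reduces, after splitting off a stochastic exponential, to bounding $\E[\exp(\int_0^T c_sV_s\,ds)]$ for deterministic coefficients $c_s$ assembled from $A_t$ and $\theta$. Here the hypothesis $\kappa^2-2\eta\sigma^2>0$ with the stated $\eta$ (and the factor $p>1$ embedded in $\eta$) is exactly the non-explosion condition guaranteeing, via the affine transform formula for the Volterra square-root variance process (\ref{vol}) as in \cite{abi2017affine}, that the associated auxiliary Riccati--Volterra equation stays finite on $[0,T]$ and hence that this exponential moment is finite; this mirrors the role of the analogous condition in Theorem \ref{Thm:alpha*}. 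With the bound in hand, the class-(D)/uniform-integrability requirement in admissibility condition (3) for $u^*$ follows by a de la Vall\'ee-Poussin ($L^p$, $p>1$) argument, and condition (1), $\E[\int_0^T (u^*_s)^2ds]<\infty$, follows from the boundedness of $A_t$ together with a first-moment bound on $\int_0^T V_s\,ds$, completing the verification.
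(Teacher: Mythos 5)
Your proposal is correct and follows essentially the same route as the paper: the same Ansatz $J^u_t$, the same completion-of-the-square in the drift identifying $u^*_t=A_t\sqrt{V_t}$, the same localization/uniform-integrability upgrade from local to true (super)martingale using $M_t\le 1$ and admissibility condition (3), and the same reduction of the admissibility of $u^*$ to an exponential moment bound $\E[\exp(\int_0^T c_sV_s\,ds)]<\infty$ via H\"older, a stochastic exponential, optional sampling, and the non-explosion condition $\kappa^2-2\eta\sigma^2>0$. The only cosmetic difference is that you deduce the martingality of $J^{u^*}$ from the class-(D) property, whereas the paper cites the martingale property of the stochastic exponential from \citet[Lemma 7.3]{abi2017affine}; both are valid.
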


\section{Investment under rough volatility}\label{Sec:Num}
\begin{figure}[!h]
	\centering
	\begin{minipage}{0.5\textwidth}
		\centering
		\includegraphics[width=0.9\textwidth]{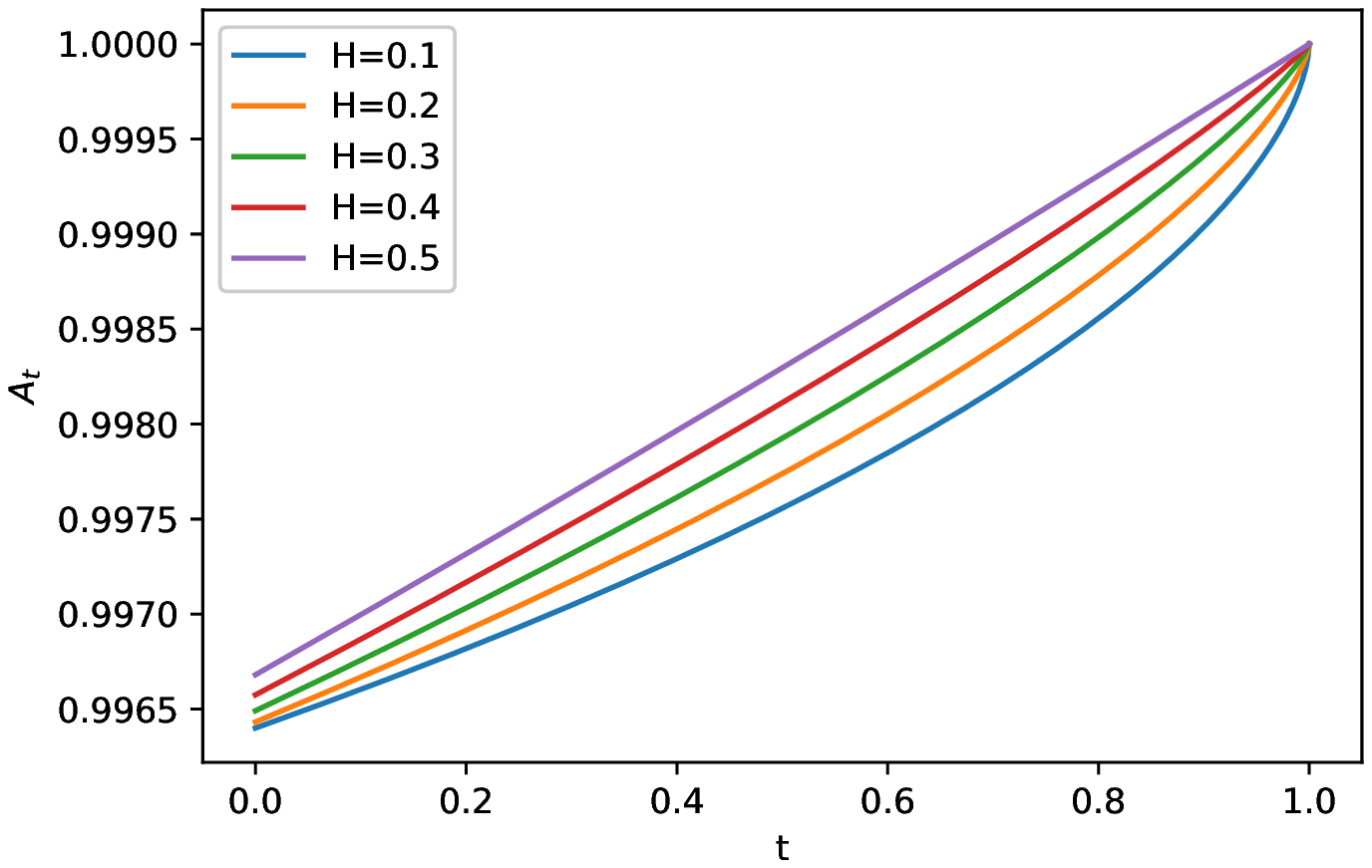}
		\subcaption{$A_t$ under power utility}\label{Fig:PowA}
	\end{minipage}%
	\begin{minipage}{0.5\textwidth}
		\centering
		\includegraphics[width=0.9\textwidth]{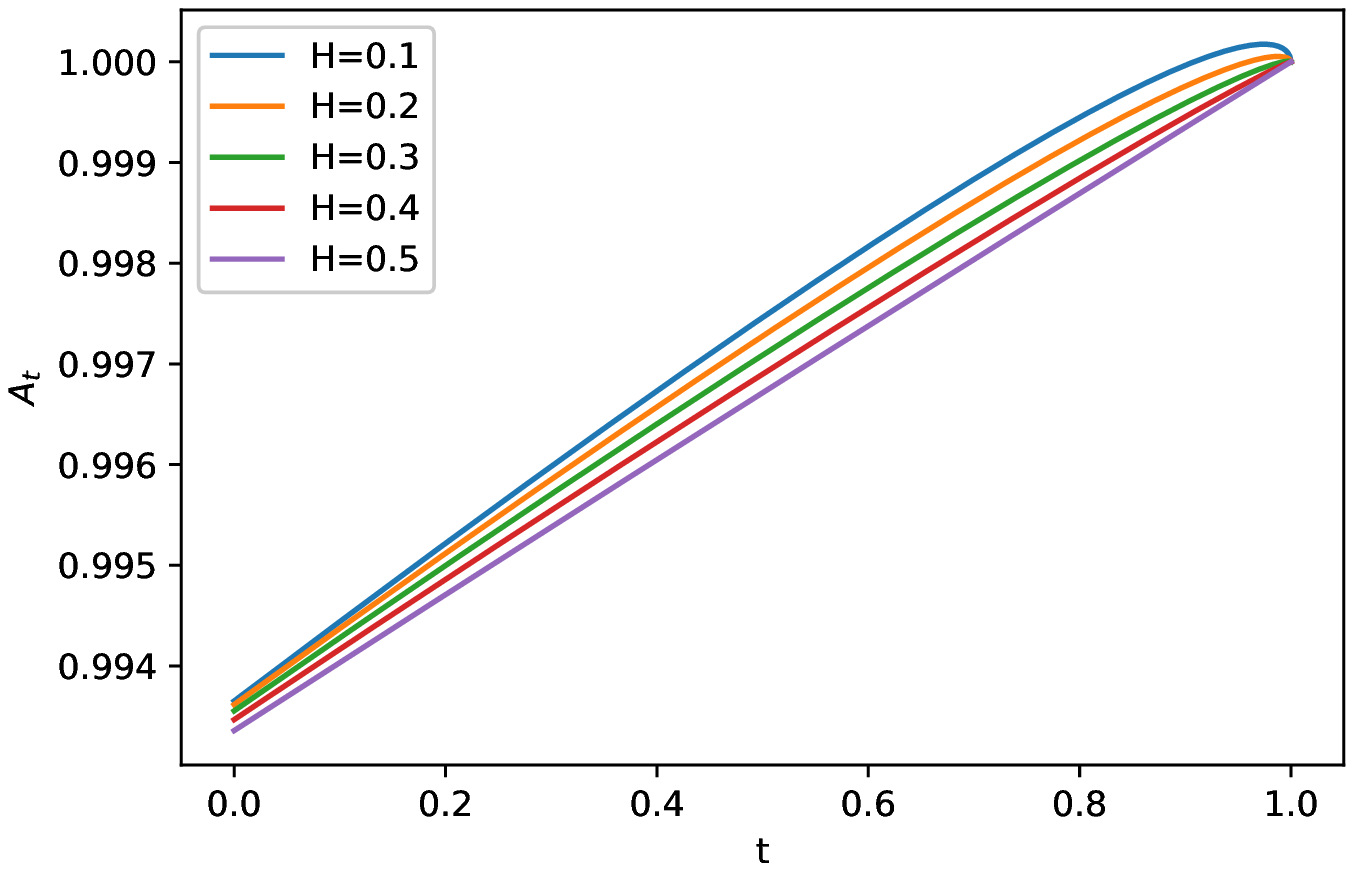}
		\subcaption{$A_t$ under exponential utility}\label{Fig:ExpA}
	\end{minipage}
	\caption{$A_t$ under different utility and $H$. We set risk-free rate $r=0.01$, risk aversion $\gamma = 0.5$, volatility of volatility $\sigma = 0.02$, mean-reversion speed $\kappa = 0.1$, risk premium parameter $\theta = 0.5$, correlation $\rho = -0.7$, and time horizon $T = 1$.}
\end{figure}

\begin{figure}[!h]
	\centering
	\begin{minipage}{0.5\textwidth}
		\centering
		\includegraphics[width=0.9\textwidth]{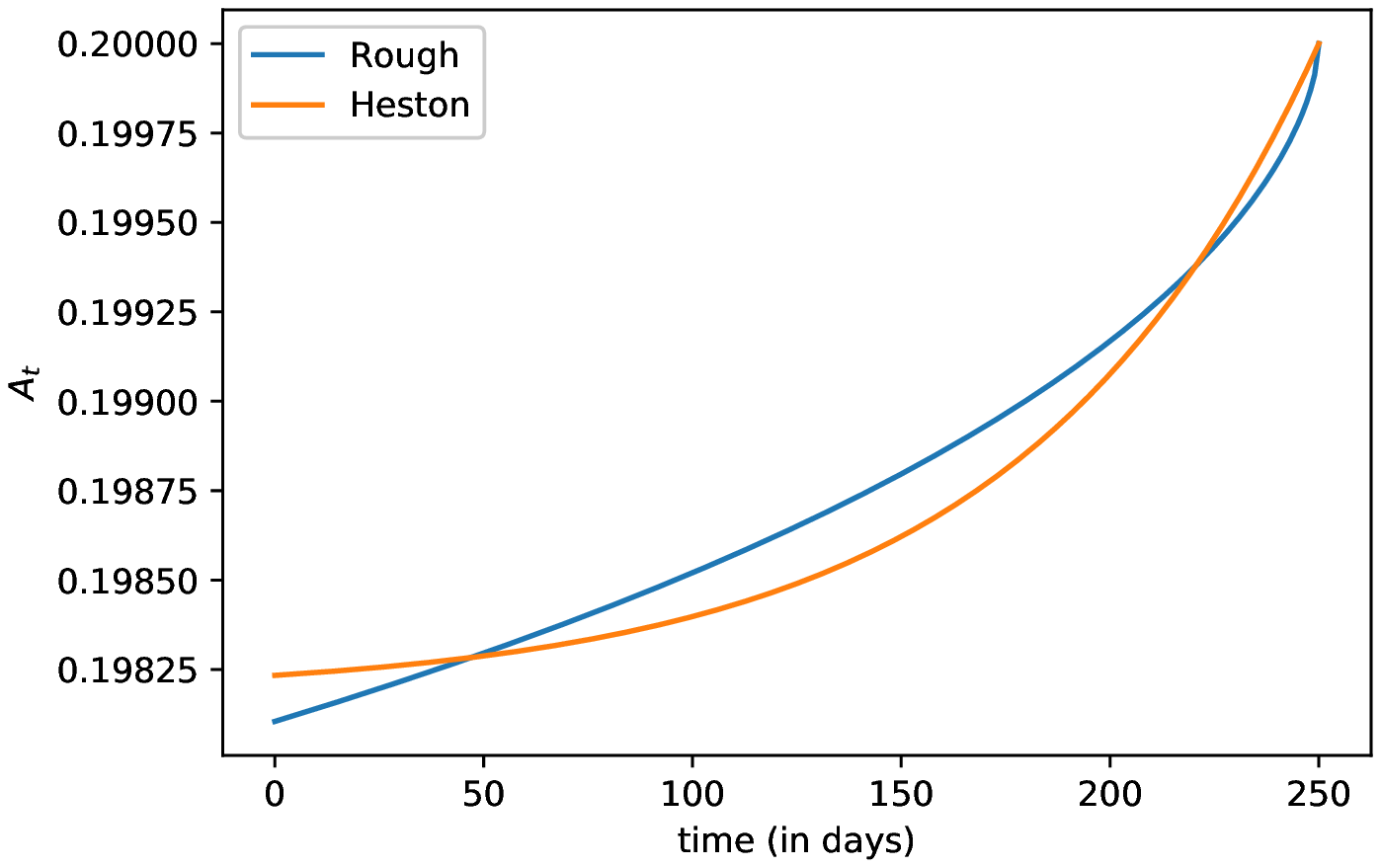}
		\subcaption{$A_t$ under power utility}\label{Fig:PowIV}
	\end{minipage}%
	\begin{minipage}{0.5\textwidth}
		\centering
		\includegraphics[width=0.9\textwidth]{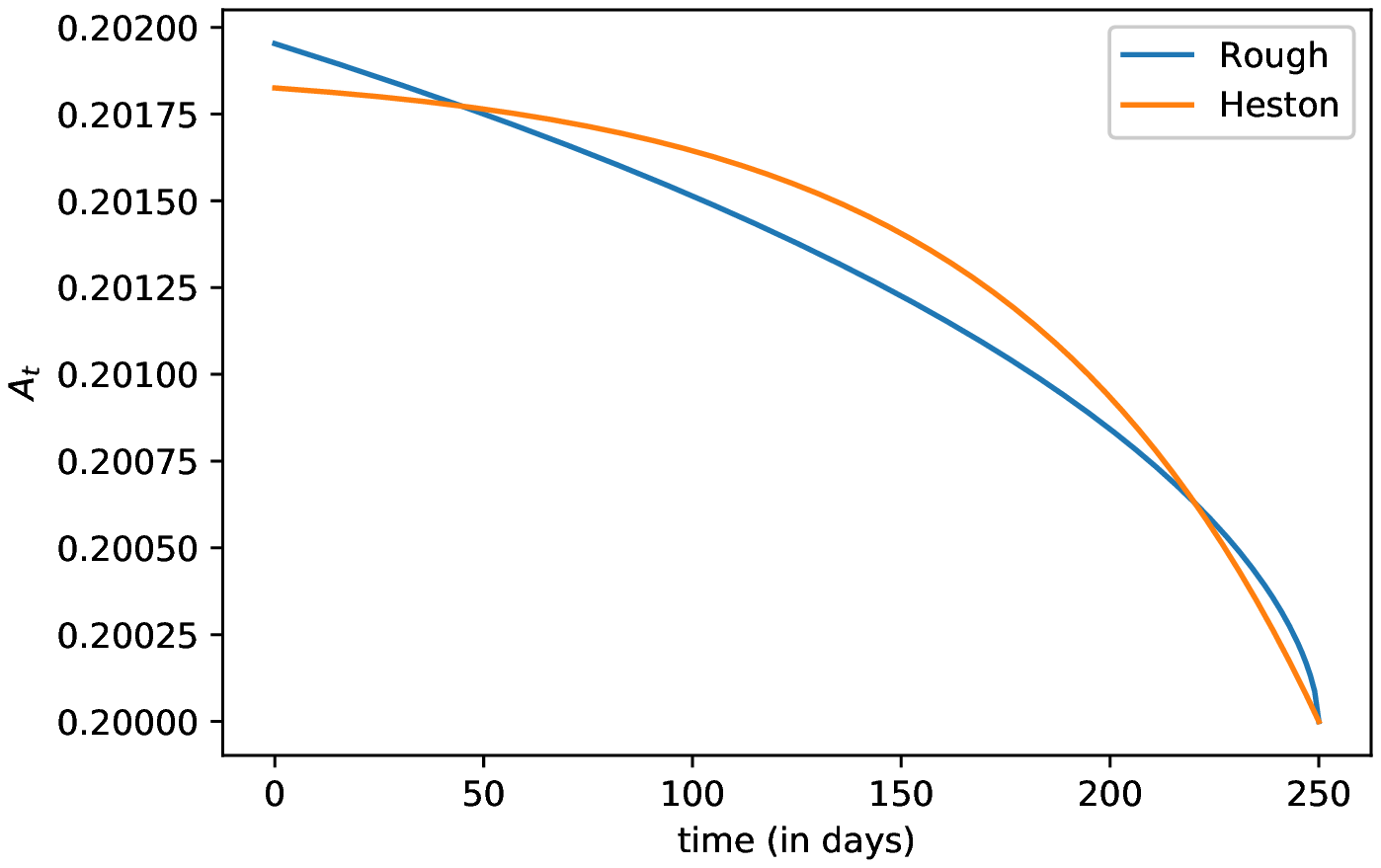}
		\subcaption{$A_t$ under exponential utility}\label{Fig:ExpIV}
	\end{minipage}
	\caption{We set $250$ time steps for one year, corresponding to the $250$ trading days in a year. Risk-free rate $r=0$. Risk premium parameter $\theta=0.1$. Risk aversion $\gamma = 0.5$. Other parameters are adopted from \citet[Table 6]{abi2019lifting} and \citet[Table 4]{abi2019lifting}. }
\end{figure}

Both power utility and exponential utility have been widely adopted in literature. Different behaviors of these utility functions have been demonstrated before. Power utility has constant relative risk aversion, while exponential utility has constant absolute risk aversion. Optimal strategy under power utility is proportional to wealth. Therefore, it guarantees a non-negative wealth and avoids bankruptcy. In contrast, optimal strategy under exponential utility is not related to investor's wealth. The proportion of wealth in the stock is lower for richer investors, since the dollar value in the stock is the same.  \cite{merton1969lifetime} wrote that exponential utility is ``behaviorially less plausible than constant relative risk aversion". In our numerical study, we discover a new difference of these two utility functions from the perspective of roughness.

We consider the fractional kernel $K(t) = \frac{t^{H - 1/2}}{\Gamma(H + 1/2)}$, $H \in (0, 1/2)$, corresponding to rough Heston model. The smaller the Hurst parameter $H$, the rougher the volatility of the stock. $H = \frac{1}{2}$ corresponds to the strategy in \cite{kraft2005optimal} under the classic Heston model. We use the Adams method \citep{eleuch2019char, han2019mean} to solve the Riccati-Volterra equations (\ref{Eq:psi}) and (\ref{Eq:Exppsi}) numerically. Figures (\ref{Fig:PowA})-(\ref{Fig:ExpA}) show the $A_t$ in $\alpha^*$ (\ref{Eq:alpha*}) and $u^*$ (\ref{Eq:Expu}), under different values of $H$. Assumptions in Theorem \ref{Thm:alpha*} and \ref{Thm:Expu} are satisfied by the parameter setting detailed in figure descriptions. Figure (\ref{Fig:PowA}) exhibits that if the stock volatility is rougher, the investment demand (\ref{Eq:alpha*}) under power utility is smaller. It is partially due to the market leverage. In the equity market, the correlation $\rho$ is usually negative. $\psi(t)$ is positive for $t>0$. Moreover, the value of $\psi(t)$ becomes larger for a smaller $H$. Interestingly, exponential utility implies complete different opinion about roughness. It suggests investing more under smaller $H$.

We stress that Figures (\ref{Fig:PowA})-(\ref{Fig:ExpA}) are sensitivity analysis and demonstrate the marginal effect of roughness parameter. However, it is not plausible to only vary Hurst parameter $H$ while keeping other parameters unchanged. Therefore, we adopt a simulated implied volatility surface in \cite{abi2019lifting}. The investors calibrate two sets of parameters under classic Heston model and rough Heston model respectively, for the same implied volatility surface. The investor under the Heston model uses the calibrated parameters in \citet[Table 6]{abi2019lifting} to implement the optimal strategy. The investor under rough Heston model uses \citet[Table 4]{abi2019lifting} instead. Figures (\ref{Fig:PowIV})-(\ref{Fig:ExpIV}) illustrate the $A_t$ under power utility and exponential utility. Again, they have different opinions about investing less or more if rough Heston model is used.  Moreover, Figures (\ref{Fig:PowA})-(\ref{Fig:ExpA}) and Figures (\ref{Fig:PowIV})-(\ref{Fig:ExpIV}) do not conflict with each other. The calibrated parameters in \citet[Table 6]{abi2019lifting} for classic Heston model have significantly larger mean-reversion and volatility-of-volatility parameters than the counterparts in rough Heston model. 

There is no definite answer to the question that investors should invest more or less for different roughness, under expected utility theory. The type of risk aversions matters. Our discovery provides a new difference in utility preference by considering roughness. But the rationale behind is largely unexplained and needs more investigations. Moreover, expected utility theory also suffers some issues, see \citet[Chapter 5]{ait2009handbook} and references therein. 

\section{Concluding remarks}\label{Sec:Con}

In this paper, we solve Merton's portfolio optimization under Volterra Heston model. Economically,
this paper shows that different forms of risk aversion described by concave utility functions exhibit different
sensitivity in investment demand with respect to the volatility roughness. Technically, we offer a novel
solution approach for the Merton portfolio problem in a rough volatility economy based on the martingale
optimality principle. The key is the recognition of a novel auxiliary stochastic process. A future direction
is to incorporate model uncertainty.


\appendix
\section{Proofs of main results}\label{Appednix}
\begin{proof}[Proof of Theorem \ref{Thm:M}]
	First of all, we point out that there exists a unique continuous solution to (\ref{Eq:psi}) over $[0,T]$  under Assumption (\ref{Assum:M}). Then, we claim 
	\begin{equation}\label{Eq:Mequi}
	M^{1/c}_t = \tildeE_t \Big[ \exp \Big(\int^T_t \big(\frac{\gamma}{c} r_s + \frac{\gamma}{2 c (1 - \gamma)} \theta^2 V_s \big) ds \Big) \Big].
	\end{equation}
	Indeed, by \citet[Theorem 4.3]{abi2017affine},
	\begin{align*}
	&\exp \Big[ \int^T_t \big(\frac{\gamma \theta^2}{2c(1-\gamma)} \xi_t(s) + \frac{\sigma^2}{2} \psi^2(T-s) \xi_t(s) \big) ds \Big] = \tildeE_t \Big[ \exp \Big(\int^T_t  \frac{\gamma}{2 c (1 - \gamma)} \theta^2 V_s ds \Big)\Big].
	\end{align*}
	The martingale assumption in \citet[Theorem 4.3]{abi2017affine} is guaranteed by \citet[Lemma 7.3]{abi2017affine} for Volterra Heston model. 
	
	As $V_t$ is non-negative, $r_t > 0$ is deterministic, and $ 1-\gamma \leq c \leq 1$, we have $M_t \geq l > 0$ in view of (\ref{Eq:Mequi}).
	
	Let $ L \triangleq \exp \Big(\int^T_0  \frac{\gamma \theta^2}{2 c (1 - \gamma)}  V_s ds \Big)$ and the Radon-Nikodym derivative at $\cF_T$ as
	\begin{equation*}
	R = \exp\Big( - \frac{\gamma^2 \theta^2}{2(1-\gamma)^2} \int^T_0 V_t dt  + \frac{\gamma \theta}{1 - \gamma} \int^T_0 \sqrt{V_t} dW_{1t} \Big).
	\end{equation*}
	Then
	\begin{align*}
	\E\Big[ \sup_{ t \in [0, T]} |M_t|^p \Big] & \leq C\E\Big[  \sup_{ t \in [0, T]} \tildeE_t \Big[ \exp \Big(\int^T_t  \frac{\gamma \theta^2}{2 c (1 - \gamma)}  V_s ds \Big) \Big]^{pc} \Big] \\
	& \leq C\E\Big[  \sup_{ t \in [0, T]} \tildeE_t \Big[ \exp \Big(\int^T_0  \frac{\gamma \theta^2}{2 c (1 - \gamma)}  V_s ds \Big) \Big]^{pc} \Big] \leq C\tildeE \Big[  \sup_{ t \in [0, T]} \tildeE_t [L]^{2pc} \Big]^{1/2} \E\big[R^{-1}\big]^{1/2}.
	\end{align*}
	By \citet[Theorem 2.6 and Lemma A.2]{han2019mean}, we have $\tildeE[L] < \infty$ if $\lambda > 0$ and $\lambda^2 - \frac{ \gamma}{c(1 - \gamma)} \theta^2 \sigma^2 > 0$. Therefore, $\tildeE_t [L]$ is a martingale under $\tilde{\p}$. Note $2pc > 1$, by Doob's maximal inequality, 
	\begin{align*}
	\tildeE \Big[  \sup_{ t \in [0, T]} \tildeE_t [L]^{2pc} \Big] \leq C \tildeE \Big[ \exp \Big(\frac{p\gamma \theta^2}{1 - \gamma} \int^T_0 V_s ds \Big) \Big] < \infty.
	\end{align*}
	The last inequality holds under the assumption that $\lambda > 0$ and $\lambda^2 - 2 p \frac{ \gamma}{1 - \gamma} \theta^2 \sigma^2 > 0$. The argument is the same for $\tildeE[L] < \infty$.
	Moreover, by H\"older's inequality,
	\begin{align*}
	&\E\big[R^{-1}\big]  \leq \E \Big[ \exp \Big(\frac{3 \gamma^2 \theta^2}{(1 - \gamma)^2} \int^T_0 V_t dt \Big) \Big]^{1/2} \E \Big[ \exp\Big( - \frac{2 \gamma^2 \theta^2}{(1-\gamma)^2} \int^T_0 V_t dt  - \frac{2 \gamma \theta}{1 - \gamma} \int^T_0 \sqrt{V_t} dW_{1t} \Big) \Big]^{1/2} < \infty.
	\end{align*}
	$\E \big[ e^{\frac{3 \gamma^2 \theta^2}{(1 - \gamma)^2} \int^T_0 V_t dt} \big]$ is finite since $\kappa^2 - 6 \frac{\gamma^2}{(1 - \gamma)^2} \theta^2 \sigma^2 > 0$. Therefore, $\E\big[ \sup_{ t \in [0, T]} |M_t|^p \big] < \infty $ holds.
	
	For property (2), the proof is in the same spirit of \citet[Theorem 4.1 (2)]{han2019mean}. Let 
	\begin{equation}
	Z_t = \int^T_t \big[ \gamma r_s + \frac{\gamma \theta^2 \xi_t(s)}{2(1-\gamma)}  + \frac{c\sigma^2}{2} \psi^2(T-s) \xi_t(s) \big] ds.
	\end{equation}
	Then $M_t = e^{Z_t} $. Applying It\^o's lemma to $\xi_t(s)$ on time $t$ yields
	\begin{equation}
	d \xi_t(s) = \frac{1}{\lambda} R_{\lambda}(s-t) \sigma \sqrt{V_t} d \tilde B_t
	\end{equation}
	from (\ref{Eq:xi}).
	Then
	\begin{align*}
	d Z_t  = &\big[ - \gamma r_t - \frac{\gamma \theta^2}{2(1-\gamma)} V_t - \frac{c\sigma^2}{2} \psi^2(T-t) V_t \big] dt \\
	&  + \frac{\gamma \theta^2}{2(1-\gamma)} \int^T_t \frac{1}{\lambda} R_{\lambda}(s-t) \sigma \sqrt{V_t} d \tilde B_t ds + \frac{c \sigma^2}{2} \int^T_t  \psi^2(T-s) \frac{1}{\lambda} R_{\lambda}(s-t) \sigma \sqrt{V_t} d \tilde B_t ds \\
	= &\big[ - \gamma r_t - \frac{\gamma \theta^2}{2(1-\gamma)} V_t - \frac{c\sigma^2}{2} \psi^2(T-t) V_t \big] dt \\
	& +  \frac{\gamma \theta^2}{2(1-\gamma)} \int^T_t \sigma \frac{1}{\lambda} R_{\lambda}(s-t) ds \sqrt{V_t} d \tilde B_t + \frac{c \sigma^2}{2} \int^T_t  \sigma \psi^2(T-s) \frac{1}{\lambda} R_{\lambda}(s-t) ds \sqrt{V_t} d \tilde B_t.
	\end{align*}
	The second equality relies on the stochastic Fubini theorem \citep{veraar2012fubini}.
	
	Next, we show
	\begin{equation}\label{Eq:Uequivalent}
	\int^T_t \Big[ \frac{c \sigma^2}{2} \psi^2(T-s) + \frac{\gamma \theta^2}{2(1-\gamma)}  \Big] \frac{1}{\lambda} R_{\lambda}(s-t) ds = c \psi (T-t). 
	\end{equation}
	
	In fact,
	\begin{align*}
	& \int^T_t \big[ \frac{c \sigma^2}{2} \psi^2(T-s) + \frac{\gamma \theta^2}{2(1-\gamma)} \big] \frac{1}{\lambda} R_{\lambda}(s-t) ds - c \psi (T-t) \\
	& = \big[ \frac{c \sigma^2}{2} \psi^2 + \frac{\gamma \theta^2}{2(1-\gamma)} \big] * \frac{1}{\lambda} R_{\lambda}(T - t) - c K* \big[\frac{\sigma^2}{2} \psi^2 - \lambda \psi+ \frac{\gamma \theta^2}{2 c (1-\gamma)} \big](T-t) \\
	& = \big[ \frac{c\sigma^2}{2} \psi^2 + \frac{\gamma \theta^2}{2(1-\gamma)} \big] * \big[ \frac{1}{\lambda} R_{\lambda} - K \big](T - t) + c \lambda K*\psi(T-t)\\
	& = - R_\lambda*K*\big[ \frac{c\sigma^2}{2} \psi^2 + \frac{\gamma \theta^2}{2(1-\gamma)} \big](T-t) + c \lambda K*\psi(T-t) \\
	& =  c \big[ \lambda K - R_\lambda - \lambda K* R_\lambda \big]* \psi(T-t) = 0.
	\end{align*}
	We have used the equality
	\begin{equation}
	R_\lambda * \psi = R_\lambda*K*\big[ \frac{\sigma^2}{2} \psi^2 - \lambda \psi + \frac{\gamma \theta^2}{2c(1-\gamma)} \big].
	\end{equation}
	
	Therefore, 
	\begin{align*}
	d M_t = & M_t d Z_t + \frac{1}{2} M_t dZ_t dZ_t \\
	= & M_t \big[ - \gamma r_t - \frac{\gamma \theta^2}{2(1-\gamma)} V_t - \frac{c\sigma^2}{2} \psi^2(T-t) V_t \big] dt + \frac{U^2_{1t} + U^2_{2t}}{2 M_t} dt + U_{1t} d\tilde W_{1t} + U_{2t} d W_{2t} \\
	=& - \big[ \gamma r_t + \frac{\gamma}{2(1-\gamma)}\theta^2 V_t \big] M_t dt - \frac{\gamma \theta}{1 - \gamma} \sqrt{V_t} U_{1t} dt - \frac{\gamma}{2(1-\gamma)}\frac{U^2_{1t}}{M_t} dt + U_{1t} dW_{1t} + U_{2t} dW_{2t}.
	\end{align*}
	
	Finally, for property (3),
	\begin{align*}
	& \E\Big[ \big(\int^T_0 U^2_{it} dt \big)^{p/4} \Big] \leq C \E\Big[ \sup_{ t \in [0, T]} |M_t|^{p/2} \big(\int^T_0 V_t dt \big)^{p/4} \Big] \leq C \E\Big[ \sup_{ t \in [0, T]} |M_t|^p \Big]^{1/2} \E\Big[ e^{a \int^T_0 V_t dt} \Big]^{1/2} < \infty,
	\end{align*}
	where $ a > 0 $ is constant.
\end{proof}

\begin{proof}[Proof of Theorem \ref{Thm:alpha*}]
	(1). Clearly, as  $M_T = 1$, $J^\alpha_T = \frac{X^\gamma_T}{\gamma}$.
	
	(2). As $M_0$ is a constant independent of $ \alpha \in \cA$, $J^\alpha_0 = \frac{x^\gamma_0}{\gamma} M_0$ is a constant independent of $ \alpha \in \cA$.
	
	(3). By It\^o's lemma,
	\begin{align*}
	dJ^\alpha_t = & \big[\theta \alpha_t \sqrt{V_t} + \frac{\gamma - 1}{2} \alpha^2_t \big] M_t X^\gamma_t dt + \alpha_t X^\gamma_t U_{1t} dt - \frac{1}{2(1-\gamma)} \big[ \theta^2 V_t M_t + 2 \theta \sqrt{V_t} U_{1t} + \frac{U^2_{1t}}{M_t} \big] X^\gamma_t dt  \\
	& + \big[ \frac{X^\gamma_t U_{1t}}{\gamma} + M_t \alpha_t X^\gamma_t \big] dW_{1t} + \frac{X^\gamma_t U_{2t}}{\gamma} dW_{2t} \\
	\triangleq & J^\alpha_t  F(\alpha, t) dt + J^\alpha_t \big[ \frac{U_{1t}}{M_t} + \gamma \alpha_t \big] dW_{1t} + J^\alpha_t \frac{U_{2t}}{M_t} dW_{2t},
	\end{align*}
	where 
	\begin{align*}
	F(\alpha, t) = & \frac{\gamma(\gamma - 1)}{2} \alpha^2 + \gamma \big[\theta \sqrt{V_t} + \frac{U_{1t}}{M_t} \big] \alpha - \frac{\gamma}{2(1-\gamma)} \big[ \theta \sqrt{V_t} +  \frac{U_{1t}}{M_t} \big]^2.
	\end{align*}
	$\alpha^*$ in (\ref{Eq:alpha*}) is derived from $\frac{\partial F}{\partial \alpha} = 0$. Note $F(\alpha, t)$ is a quadratic function on $\alpha$ and $\gamma - 1 < 0$. Since $F(\alpha^*, t) = 0$, then $F(\alpha, t) \leq 0$.
	
	Moreover, $J^\alpha_t = \frac{M_0 x^\gamma_0}{\gamma} e^{\int^t_0 F(\alpha_s, s) ds} G_t$, where
	\begin{equation*}
	G_t  = \exp \Big\{ - \frac{1}{2} \int^t_0 [ (\frac{U_{1s}}{M_s} + \gamma \alpha_s)^2 + \frac{U^2_{2s}}{M^2_s} ]ds + \int^t_0 [\frac{U_{1s}}{M_s} + \gamma \alpha_s] dW_{1s} + \int^t_0 \frac{U_{2s}}{M_s} dW_{2s} \Big\}.
	\end{equation*}
	$e^{\int^t_0 F(\alpha_s, s) ds}$ is non-increasing. Since $\int^t_0 \alpha^2_s ds < \infty$, $\p$-$\as$, the stochastic exponential $G_t$ is a local martingale. There exists a sequence of stopping times $\{\tau_n\}_{n=1,2,3...}$ satisfying $\lim_{n \rightarrow \infty} \tau_n = T$, $\p$-$\as$, such that 
	\begin{equation*}
	\E[J^\alpha_{t\wedge \tau_n} | \cF_s] \leq J^\alpha_{s\wedge\tau_n}, \quad s \leq t,
	\end{equation*}
	for every $n$.  Moreover, $J^\alpha_t$ is bounded below by 0. Let $n\rightarrow \infty$ and by Fatou's lemma, we deduce $J^\alpha_t$ is a supermartingale.
	
	For $\alpha_t = \alpha^*_t$, $G_t$ is a martingale by \citet[Lemma 7.3]{abi2017affine}. Subsequently, $J^{\alpha^*}_t$ is a true martingale. We have verified all conditions required by martingale optimality principle, except for the admissibility of $\alpha^*$.
	
	By Doob's maximal inequality,
	\begin{align*}
	\E \Big[ \sup_{ t \in [0, T]} | X^*_t |^q \Big] & \leq C \E \Big[ \sup_{ t \in [0, T]} e^{2q \int^t_0 \theta A_s V_s ds} \Big]^{1/2} \E \Big[ \sup_{ t \in [0, T]} \Big|  \exp \Big(- \int^t_0 \frac{A^2_s}{2}  V_s ds + \int^t_0 A_s \sqrt{V_s} dW_{1s} \Big) \Big|^{2q}\Big]^{1/2} \\
	& \leq C \E \Big[ e^{2q \int^T_0 |\theta A_s| V_s ds} \Big]^{1/2} \E \Big[ \exp \Big(- q\int^T_0 A^2_s  V_s ds + 2q \int^T_0 A_s \sqrt{V_s} dW_{1s} \Big) \Big]^{1/2}.
	\end{align*}
	The first term is finite. The second term is also finite. In fact, by H\"older's inequality,
	\begin{align*}
	& \E \Big[ \exp \Big(- \int^T_0 q A^2_s  V_s ds + \int^T_0 2q A_s \sqrt{V_s} dW_{1s} \Big) \Big] \\
	& \leq  \E \Big[ e^{(8q^2 - 2q) \int^T_0 A^2_s  V_s ds} \Big]^{1/2} \E\Big[ \exp \Big(- 8 q^2 \int^T_0 A^2_s  V_s ds + 4q \int^T_0 A_s \sqrt{V_s} dW_{1s} \Big) \Big]^{1/2} < \infty.
	\end{align*}
	$\E \Big[ \sup_{ t \in [0, T]} | X^*_t |^q \Big] < \infty$ is proved. It becomes straightforward to verify $\alpha^*$ is admissible. 
\end{proof}

\begin{proof}[Proof of Theorem \ref{Thm:ExpM}]
		It is straightforward to see that $M_t > 0$ in (\ref{Eq:ExpM}). As for the upper bound, if $1 - \rho^2 = 0$, then $M_t \leq 1$, $\p$-$\as$. If $1 - \rho^2 > 0$, we have
	\begin{equation}\label{Eq:TransM}
	M^{1-\rho^2}_t = \tildeE \Big[ e^{- \frac{\theta^2(1 - \rho^2)}{2} \int^T_t V_s ds} \Big| \cF_t \Big],
	\end{equation}
	in the same spirit of \citet[Theorem 4.1]{han2019mean}. Then Property (1) is proved.
	
	For Property (2), denote $ M_t = e^{Z_t} $ in (\ref{Eq:ExpM}) with proper $Z_t$. Then
	\begin{align*}
	d Z_t  = &\big[ \frac{\theta^2}{2} - \frac{(1 - \rho^2) \sigma^2}{2} \psi^2(T-t) \big] V_t dt + d \tilde B_t \cdot \sigma \sqrt{V_t}   \int^T_t \Big[ \frac{(1-\rho^2) \sigma^2}{2} \psi^2(T-s) - \frac{\theta^2}{2}  \Big] \frac{1}{\lambda} R_{\lambda}(s-t) ds \\
	= &\big[ \frac{\theta^2}{2} - \frac{(1 - \rho^2) \sigma^2}{2} \psi^2(T-t) \big] V_t dt +\psi(T - t) \sigma \sqrt{V_t} d \tilde B_t .
	\end{align*}
	Applying It\^o's lemma to $M_t= e^{Z_t}$ with function $f(z) = e^z$ gives (\ref{Eq:ItoExpM}).
	
	Proof of Property (3) is exactly the same as \citet[Theorem 4.1]{han2019mean}.
\end{proof}

\begin{proof}[Proof of Theorem \ref{Thm:Expu}]
The first two conditions are straightforward. For (3), by It\^o's lemma,
\begin{align*}
dJ^u_t \triangleq & J^u_t  F(u_t, t) dt + J^u_t \big[ \frac{U_{1t}}{M_t} -\gamma e^{\int^T_t r_v dv} u_t \big] dW_{1t} + J^u_t \frac{U_{2t}}{M_t} dW_{2t},
\end{align*}
where 
\begin{align*}
F(u, t) = & \frac{\gamma^2}{2} e^{\int^T_t 2r_v dv} u^2 - \gamma e^{\int^T_t r_v dv} \big[\theta \sqrt{V_t} + \frac{U_{1t}}{M_t} \big] u + \frac{1}{2} \big[ \theta \sqrt{V_t} +  \frac{U_{1t}}{M_t} \big]^2.
\end{align*}
$u^*$ in (\ref{Eq:Expu}) is derived from $\frac{\partial F}{\partial u} = 0$. Note $F(u^*, t) = 0$, then $F(u, t) \geq 0$.

Moreover, $J^u_t = - \frac{M_0}{\gamma} \exp\big[ - \gamma e^{ \int^T_0 r_v dv} x_0 \big] e^{\int^t_0 F(u_s, s) ds} G_t$,
\begin{equation*}
 G_t  = \exp \Big\{ - \frac{1}{2} \int^t_0 [ (\frac{U_{1s}}{M_s} -\gamma e^{\int^T_s r_v dv} u_s)^2 + \frac{U^2_{2s}}{M^2_s} ]ds + \int^t_0 [\frac{U_{1s}}{M_s} -\gamma e^{\int^T_s r_v dv} u_s] dW_{1s} + \int^t_0 \frac{U_{2s}}{M_s} dW_{2s} \Big\}.
\end{equation*}
Since $u(\cdot)$ is admissible, $G_t$ is a local martingale. There exists a sequence of stopping times $\{\tau_n\}_{n = 1,2,3,...}$ and $\lim_{n \rightarrow \infty} \tau_n = T$, $\p$-$\as$, such that $G_{t \wedge \tau_n}$ is a positive martingale for every $n$. Furthermore, $- \frac{M_0}{\gamma} \exp\big[ - \gamma e^{ \int^T_0 r_v dv} x_0 \big] e^{\int^t_0 F(u_s, s) ds}$ is non-increasing. Therefore, $J^u_{t\wedge \tau_n}$ is a supermartingale. Then for $s \leq t$, $\E[ J^u_{t\wedge \tau_n} | \cF_s] \leq J^u_{s\wedge \tau_n}$. It implies that for any set $ A \in \cF_s$, 
\begin{equation*}
\E[ J^u_{t\wedge \tau_n} \id_A] \leq \E[J^u_{s\wedge \tau_n}\id_A].
\end{equation*}
Since $\{\exp\big[ - \gamma e^{ \int^T_{t\wedge \tau_n} r_v dv} X_{t \wedge \tau_n} \big]\}_n$ is uniformly integrable and $M$ is bounded, $\{J^u_{t\wedge \tau_n} \}_n$ and $\{J^u_{s\wedge \tau_n} \}_n$ are uniformly integrable. Let $n \rightarrow \infty$, then $\E[ J^u_t \id_A] \leq \E[J^u_s \id_A]$. Then we deduce that $J^u$ is a supermartingale.

For $u_t = u^*_t$, $G_t$ is a martingale by \citet[Lemma 7.3]{abi2017affine}. Subsequently, $J^{u^*}_t$ is a true martingale. For the admissibility of $u^*$, first,
\begin{align*}
\E\Big[\int^T_0 u^2_t dt \Big] \leq C \E\Big[ \int^T_0 V_t dt \Big] \leq C\sup_{t \in [0, T]} \E\big[  V_t \big] < \infty.
\end{align*}	
The last term is finite by \citet[Lemma 3.1]{abi2017affine}.

To prove that $\{\exp\big[ - \gamma e^{ \int^T_\tau r_v dv} X^*_\tau \big]: \tau \text{ stopping time with values in } [0, T] \}$ is a uniformly integrable family, we only need to show
\begin{equation}
\sup_{\tau}\E\Big[ e^{- p\gamma e^{ \int^T_\tau r_v dv} X^*_\tau} \Big] < \infty, \text{ for some } p > 1.
\end{equation}
Note
\begin{align*}
X^*_\tau =& e^{\int^\tau_0 r_sds}x_0 + \int^\tau_0 e^{ \int^\tau_s r_v dv} \theta A_s V_s ds + \int^\tau_0 e^{\int^\tau_s r_v dv} A_s \sqrt{V_s} dW_{1s},
\end{align*}
then
\begin{align*}
& \sup_{\tau}\E\Big[ e^{- p\gamma e^{ \int^T_\tau r_v dv} X^*_\tau} \Big] \leq C \sup_{\tau} \E \Big[ \exp \Big(- p \gamma \int^\tau_0 e^{ \int^T_s r_v dv} \theta A_s V_s ds - p \gamma \int^\tau_0 e^{\int^T_s r_v dv} A_s \sqrt{V_s} dW_{1s} \Big) \Big] \\
& \leq C \sup_{\tau} \E \Big[ \exp \Big(p \gamma \int^T_0 e^{ \int^T_s r_v dv} |\theta A_s| V_s ds + p^2\gamma^2 \int^T_0 e^{\int^T_s 2r_v dv} A^2_s V_s ds \\
&\hspace{3cm} - p^2\gamma^2 \int^\tau_0 e^{\int^T_s 2r_v dv} A^2_s V_s ds - p \gamma \int^\tau_0 e^{\int^T_s r_v dv} A_s \sqrt{V_s} dW_{1s} \Big) \Big] \\
& \leq  C \E \Big[ \exp\Big(\int^T_0 (2 p^2 \gamma^2 e^{ \int^T_s 2 r_v dv} A^2_s + 2 p \gamma e^{ \int^T_s r_v dv} |\theta A_s|) V_s ds \Big) \Big]^{1/2} \\
& \quad \times \sup_{\tau} \E\Big[ \exp \Big(- 2 p^2 \gamma^2 \int^\tau_0 e^{\int^T_s 2 r_v dv} A^2_s V_s ds  -2 p \gamma \int^\tau_0 e^{\int^T_s r_v dv} A_s \sqrt{V_s} dW_{1s} \Big) \Big]^{1/2} \\
& < \infty.
\end{align*}
The last inequality holds by \citet[Lemma A.2 and Theorem 2.6]{han2019mean} with assumption $\kappa^2 - 2 \eta \sigma^2 > 0$ and optional sampling theorem with the fact that $\tau \leq T$.
\end{proof}

\end{document}